\newcommand{\ignore}[1]{}
\newcommand{\revision}[1]{#1}
\newcommand{\guessing}[0]{\#\mathsf{guessing}(pw)}
\newcommand{\hard}[1]{\ensuremath{\mathsf{GetHardness}(#1)}}
\newenvironment{remindertheorem}[1]{\medskip \noindent {\bf Reminder of  #1.  }\em}{}
\newenvironment{proofof}[1]{\begin{trivlist} \item {\bf Proof
#1:~~}}
  {\qed\end{trivlist}}
\renewenvironment{proofof}[1]{\par\medskip\noindent{\bf Proof of #1: \ }}{\hfill$\Box$\par\medskip}
\algrenewcommand\algorithmicrequire{\textbf{Input:}}
\algrenewcommand\algorithmicensure{\textbf{Output:}}
\newcommand{\red}[1]{\cellcolor{red!50}#1}
\newcommand{\yellow}[1]{\cellcolor{yellow!50}#1}
\newcommand{\tikzDefaults}[0]{
  \pgfplotsset{
    every axis/.append style={line width = 0.3pt},
    title style={align=center},
    xlabel={$v/C_{max}$},
    ylabel={Fraction of Cracked Passwords},
    log basis x={10},
    grid=major,
    cycle list = {{black}, {red, densely dotted}, {blue, densely dashed},{red!50!black, loosely dotted}, {blue!50!black, loosely dashdotted}, {blue, densely dashdotted}, {blue, densely dashdotdotted}, {blue!50!black, loosely dotted}},
    legend style = {font=\tiny, at={(.01,.99)}, anchor=north west},
    legend entries = {deterministic\\  $\tau = 3$\\ $\tau = 5$ \\ improvement: {{\color{black}black- \color{red} red}} \\ improvement: {{\color{black}black- \color{blue} blue}}\\}
    }
}
\begin{document}
\title{DAHash: Distribution Aware Tuning of Password Hashing Costs}
\titlerunning{DAHash}
\author{Wenjie Bai\inst{1} \and
Jeremiah Blocki\inst{1}}
\authorrunning{Wenjie Bai, Jeremiah Blocki}
%
\institute{Department of Compouter Science, Purdue University, IN, USA \\
\email{\{bai104,jblocki\}@purdue.edu}}
\maketitle              
\begin{abstract}
An attacker who breaks into an authentication server and steals all of the cryptographic password hashes is able to mount an offline-brute force attack against each user’s password. Offline brute-force attacks against passwords are increasingly commonplace and the danger is amplified by the well documented human tendency to select low-entropy password and/or reuse these passwords across multiple accounts. Moderately hard password hashing functions are often deployed to help protect passwords against offline attacks by increasing the attacker's guessing cost. However, there is a limit to how ``hard'' one can make the password hash function as authentication servers are resource constrained and must avoid introducing substantial authentication delay. Observing that there is a wide gap in the strength of passwords selected by different users we introduce DAHash (Distribution Aware Password Hashing) a novel mechanism which reduces the number of passwords that an attacker will crack. Our key insight is that a resource-constrained authentication server can dynamically tune the hardness parameters of a password hash function based on the (estimated) strength of the user's password. We introduce a Stackelberg game to model the interaction between a defender (authentication server) and an offline attacker. Our model allows the defender to optimize the parameters of DAHash e.g., specify how much effort is spent to hash weak/moderate/high strength passwords. We use several large scale password frequency datasets to empirically evaluate the effectiveness of our differentiated cost password hashing mechanism. We find that the defender who uses our mechanism can reduce the fraction of passwords that would be cracked by a rational offline attacker by \revision{around $15\%$}.


\keywords{Password hashing  \and DAHash \and Stackelberg game.}
\vspace{-0.3cm}
\end{abstract}
\section{Introduction}

Breaches at major organizations have exposed billions of user passwords to the dangerous threat of offline password cracking. An attacker who has stolen the cryptographic hash of a user’s password could run an offline attack by comparing the stolen hash value with the cryptographic hashes of every password in a large dictionary of popular password guesses. An offline attacker can check as many guesses as s/he wants since each guess can be verified without interacting with the authentication server. The attacker is limited only by the cost of checking each password guess i.e., the cost of evaluating the password hash function. 

Offline attacks are a grave threat to security of users’ information for several reasons. First, the entropy of a typical user chosen password is relatively low~e.g., see \cite{SP:Bonneau12}. Second, users often reuse passwords across multiple accounts to reduce cognitive burden. Finally, the arrival of GPUs, FPGAs and ASICs significantly reduces the cost of evaluating a password hash functions such as PBKDF2~\cite{kaliski2000pkcs} millions or billions of times. Blocki et al.~\cite{SP:BloHarZho18} recently argued that PBKDF2 {\em cannot} adequately protect user passwords without introducing an intolerable authentication delay (e.g., $2$ minutes) because the attacker could use ASICs to reduce guessing costs by many orders of magnitude.  

Memory hard functions (MHFs)~\cite{Per09,Argon2} can be used to build ASIC resistant password hashing algorithms. The Area x Time complexity of an ideal MHF will scale with $t^2$, where $t$ denotes the time to evaluate the function on a standard CPU. Intuitively, to evaluate an MHF the attacker must dedicate $t$ blocks of memory for $t$ time steps, which ensures that the cost of computing the function is equitable across different computer architectures i.e., RAM on an ASIC is still expensive. Because the ``full cost'' \cite{JC:Wiener04} of computing an ideal MHF scales quadratically with $t$ it is also possible to rapidly increase guessing costs without introducing an untenable delay during user authentication --- by contrast the full cost of hash iteration based KDFs such as  PBKDF2~\cite{kaliski2000pkcs}  and BCRYPT~\cite{provos1999bcrypt} scale linearly with $t$. Almost all of the entrants to the recent Password Hashing Competition (PHC)~\cite{EPRINT:Wetzels16} claimed some form of memory-hardness.  

Even if we use MHFs there remains a fundamental trade-off in the design of good password hashing algorithms. On the one hand the password hash function should be sufficiently expensive to compute so that it becomes economically infeasible for the attacker to evaluate the function millions or billions of times per user --- even if the attacker develops customized hardware (ASICs) to evaluate the function.  On the other hand the password hashing algorithm cannot be so expensive to compute that the authentication server is unable to handle the workload when multiple users login simultaneously. Thus, even if an organization uses memory hard functions it will not be possible to protect all user passwords against an offline attacker e.g., if the password hashing algorithm is not so expensive that the authentication server is overloaded then it will almost certainly be worthwhile for an offline attacker to check the top thousand passwords in a cracking dictionary against each user's password. In this sense all of the effort an authentication server expends protecting the weakest passwords is (almost certainly) wasted. 
\vspace{-0.3cm}

\paragraph{Contributions} 
We introduce DAHash (Distribution Aware Hash) a password hashing mechanism that minimizes the damage of an offline attack by tuning key-stretching parameters for each user account based on password strength. In many empirical password distributions there are often  several passwords that are so popular that it would be infeasible for a resource constrained authentication server to  dissuade an offline attacker from guessing these passwords e.g., in the Yahoo! password frequency corpus ~\cite{SP:Bonneau12,NDSS:BloDatBon16} the most popular password was selected by approximately $1\%$ of users. Similarly, other users might select passwords that are strong enough to resist offline attacks even with minimal key stretching. The basic idea behind DAHash is to have the resource-constrained authentication server shift more of its key-stretching effort towards saveable  password i.e., passwords the offline attacker could be disuaded from checking. 

  Our DAHash mechanism partitions passwords into $\tau$ groups e.g., weak, medium and strong when $\tau=3$. We then select a different cost parameter $k_i$ for each group $G_i$, $i \leq \tau$ of passwords. If the input password $pw$ is in group $G_i$ then we will run our moderately hard key-derivation function with cost parameter $k_i$ to obtain the final hash value $h$.  Crucially, the hash value $h$ stored on the server will not reveal any information about the cost parameter $k_i$ or, by extension, the group $G_i$.  

  We adapt a Stackelberg Game model of Blocki and Datta~\cite{BlockiD16} to help the defender (authentication server) tune the DAHash cost parameters $k_i$ to minimize the fraction of cracked passwords. The Stackelberg Game models the interaction between the defender (authentication server) and an offline attacker as a Stackelberg Game. The defender (leader) groups passwords into different strength levels and selects the cost parameter $k_i$ for each group of passwords (subject to maximum workload constraints for the authentication server) and then the offline attacker selects the attack strategy which maximizes his/her utility (expected reward minus expected guessing costs). The attacker's expected utility will depend on the DAHash cost paremeters $k_i$ as well, the user password distribution, the value $v$ of a cracked password to the attacker and the attacker's strategy i.e., an ordered list of passwords to check before giving up. We prove that an attacker will maximize its utility by following a simple greedy strategy. We then use an \revision{evolutionary} algorithm to help the defender compute an optimal strategy i.e., the optimal way to tune DAHash cost parameters for different groups of passwords. The goal of the defender is to minimize the percentage of passwords that an offline attacker cracks when playing the utility optimizing strategy in response to the selected DAHash parameters $k_1,\ldots,k_{\tau}$. 
  
 Finally, we use several large password datasets to evaluate the effectiveness of our differentiated cost password hashing mechanism. \revision{We use the empirical password distribution to evaluate the performance of DAHash when the value $v$ of a cracked password is small. We utilize Good-Turing frequency estimation to help identify and highlight uncertain regions of the curve i.e., where the empirical password distribution might diverge from the real password distribution. To evaluate the performance of DAHash when $v$ is large we derive a password distirbution from guessing curves obtained using the Password Guessing Service~\cite{USENIX:USBCCKKMMS15}. The Password Guessing Service uses sophisticated models such as Probabilistic Context Free Grammars~\cite{SP:WAMG09,SP:KKMSVB12,NDSS:VerColTho14}, Markov Chain Models~\cite{NDSS:CasDurPer12,Castelluccia2013,SP:MYLL14,USENIX:USBCCKKMMS15}  and even neural networks~\cite{USENIX:MUSKBCC16} to generate password guesses using Monte Carlo strength estimation~\cite{CCS:DelFil15}.  We find that DAHash reduces the fraction of passwords cracked by a rational offline attacker by up to $15\%$ (resp. $20\%$) under the empirical distribution (resp. derived distribution).  }
 \vspace{-0.3cm}

\section{Related Work}

\ignore{\subsection{Key-stretching} }
Key-stretching was proposed as early as 1979 by Morris and Thomson as a way to protect passwords against brute force attacks~\cite{Morris1979}. Traditionally key stretching has been performed using hash iteration e.g., PBKDF2~\cite{kaliski2000pkcs} and BCRYPT~\cite{provos1999bcrypt}. More modern hash functions such as SCRYPT and Argon2~\cite{Argon2}, winner of the password hashing competition in 2015~\cite{EPRINT:Wetzels16}, additionally require a significant amount of memory to evaluate. An economic analysis Blocki et al.~\cite{SP:BloHarZho18} suggested that hash iteration based key-derivation functions no longer provide adequate protection for lower entropy user passwords due to the existence of ASICs. On a positive note they found that the use of memory hard functions can significantly reduce the fraction of passwords that a rational adversary would crack. 

The addition of ``salt'' is a crucial defense against rainbow table attacks~\cite{C:Oechslin03} i.e., instead of storing $(u, H(pw_u))$ and authentication server will store $(u,s_u,H(s_u,pw_u))$ where $s_u$ is a random string called the salt value. Salting defends against pre-computation attacks (e.g.,~\cite{EC:DodGuoKat17}) and ensures that each password hash will need to be cracked independently e.g., even if two users $u$ and $u'$  select the same password we will have $H(s_{u'},pw_{u'}) \neq H(s_u,pw_u)$ with high probability as long as $s_u \neq s_{u'}$.  

Manber proposed the additional inclusion of a short random string called ``pepper'' which would not be stored on the server~\cite{manber1996simple} e.g., instead of storing  $(u,s_u,H(s_u,pw_u))$ the authentication server would store $(u,s_u,H(s_u,x_u,pw_u))$ where the pepper $x_u$ is a short random string that, unlike the salt value $s_u$, is not recorded. When the user authenticates with password guess $pw'$ the server would evaluate $H(s_u,x,pw')$ for each possible value of $x \leq x_{max}$ and accept if and only if $H(s_u,x,pw') = H(s_u,x_u,pw_u)$ for some value of $x$. The potential advantage of this approach is that the authentication  server can usually halt early when the legitimate user authenticates, while the attacker will have to check every different value of $x\in [1,x_{max}]$ before rejecting an incorrect password. Thus, on average the attacker will need to do more work than the honest server. 

Blocki and Datta observed that non-uniform distributions over the secret pepper value $x \in [1,x_{max}]$ can sometime further increase the attacker's workload relative to an honest authentication server~\cite{BlockiD16}. They showed how to optimally tune the pepper distribution by using  Stackelberg game theory~\cite{BlockiD16}. However, it is not clear how pepper could be effectively integrated with a modern memory hard function such as Argon2 or SCRYPT. One of the reasons that MHFs are incredibly effective is that the ``full cost''~\cite{JC:Wiener04} of evaluation can scale quadratically with the running time $t$. Suppose we have a hard limit on the running time $t_{max}$ of the authentication  procedure e.g., $1$ second. If we select a secret pepper value $x \in [1,x_{max}]$ then we would need to ensure that $H(s_u,x,pw')$ can be evaluated in time at most $t_{max}/x_{max}$ --- otherwise the total running time to check all of the different pepper values sequentially would exceed $t_{max}$. In this case the ``full cost'' to compute $H(s_u,x,pw')$ for every $x \in [1,x_{max}]$ would be at most $ O \left( x_{max} \times(t_{max}/x_{max})^2 \right) =  O \left( t_{max}^2/x_{max} \right) $.  If instead we had not used pepper then it would have been possible to ensure that the full cost could be as large as $\Omega(t_{max}^2)$ simply by allowing the MHF to run for time $t_{max}$ on a single input. Thus, in most scenarios it would  be preferable for the authentication server to use a memory-hard password hashing algorithm without incorporating pepper. 

Boyen's work on ``Halting Puzzles'' is also closely related to our own work \cite{USENIX:Boyen07}. In a halting puzzle the (secret) running time parameter $t \leq t_{max}$ is randomly chosen whenever a new account is created. The key idea is that an attacker will need to run in time $t_{max}$ to definitively reject an incorrect password while it only takes time $t$ to accept a correct password. In Boyen's work the distribution  over running time parameter $t$ was the same for all passwords. By contrast, in our work we assign a fixed hash cost parameter to each password and this cost parameter may be different for distinct passwords. We remark that it may be possible to combine both ideas i.e., assign a different maximum running time parameter $t_{max,pw}$ to different passwords. We leave it to future work to explore whether or not the composition of both mechanisms might yield further security gains. 


\ignore{
\subsection{Replacing Passwords} While there have been many attempts to replace passwords, Bonneau et al.~\cite{SP:BHVS12,Herley2012,bonneau2015passwords} argue that password authentication will remain entrenched as the dominant form of authentication on the internet for years to come as no alternative authentication scheme is clearly superior to passwords e.g., biometrics is hard to revoke and hardware tokens are more expensive to deploy. Text passwords are easy to use and deploy and users are already familiar with them. 

\subsection{Improving Password Strength} Another line of research has focused on encouraging (or forcing ) users to select stronger passwords during account creation \cite{campbell2011impact,Komanduri2011,Shay2010,Stanton2005,Inglesant2010,Shay2014}. However, these approaches have had limited success. Strict password composition policies (e.g., requiring user's to select passwords that contain numbers and/or capital letters) inevitably introduce a high usability cost \cite{Inglesant2010,NIST2014,Florencio2014lisa,Adams1999} while the strength of passwords generated with such help does not increase significantly\cite{campbell2011impact,Komanduri2011,Shay2010,Stanton2005,Shay2014,Shay2016} --- sometimes stricter policies even result in weaker passwords e.g.,~\cite{blockiPasswordComposition,Komanduri2011}. Password strength meters provide users with dynamic feedback during password creation~\cite{USENIX:KSCHS14,USENIX:UKKLMMPSVBCC12,de2014very}, but the feedback provided by password strength meters is often inaccurate e.g., see~\cite{CCS:GolDur18}. Another line of work focuses on training users to pick stronger passwords e.g., using mnemonic techniques~\cite{Yan2000,AC:BloBluDat13,NDSS:BKCD15,CCS:YLCXP16} and/or spaced repetition~\cite{AC:BloBluDat13,NDSS:BKCD15,blum2015publishable,USENIX:BonSch14}.

\subsection{Memory Hard Functions} Due to the recently complete password hashing competition (PHC~\cite{EPRINT:Wetzels16}) there has been a lot of work on developing secure memory hard functions. There are two types of MHFs: data-independent (iMHFs) and data-dependent (dMHFs). iMHFs are naturally resistant to side channel attacks such as cache timing~\cite{Ber05,AC:ForLucWen14}, but many iMHF constructions are vulnerable to the parallel pebbling attacks of Alwen and Blocki~\cite{C:AlwBlo16,AB17} including PHC winner Argon2i~\cite{Argon2,TCC:BloZho17}. The attacks reduce the amortized AT complexity of the iMHF. While we have nearly-ideal iMHFs that achieve AT complexity $\Omega(t^2/\log t)$~\cite{EC:AlwBloPie17,CCS:AlwBloHar17}, it is also known that any iMHF achieves AT complexity at most $O(t^2)$~\cite{C:AlwBlo16}. By contrast, SCRYPT~\cite{Per09}, one of the earliest dMHF proposals, has been found to be optimally memory hard with respect to AT complexity~\cite{EC:ACKKPT16,EC:ACPRT17}. The designers of Argon2~\cite{Argon2} currently recommend running in {\em hybrid mode} Argon2id to balance side-channel resistance and resistance to parallel pebbling attacks on Argon2i.

\subsection{Distributed Password Hashing} If an organization has multiple authentication servers then they could distribute storage and/or computation of the password hashes across multiple servers~\cite{USENIX:BJKS03,CCS:CamLysNev12,USENIX:ECSJR15,USENIX:LESC17}. Juels and Rivest~\cite{CCS:JueRiv13} proposed storing the hashes of fake passwords (honeywords) and using a second auxiliary server to detect authentication attempts with honeywords (alerting the organization that an breach has occurred).
}
\vspace{-0.3cm}


\section{DAHash}

In this section, we first introduce some preliminaries about passwords then present the DAHash and explain how the authentication process works with this mechanism. We also discuss ways in which a (rational) offline attacker might attempt to crack passwords protected with the differentiated cost mechanism.

\vspace{-0.3cm}
\subsection{Password Notation}
We let $\mathcal{P} = \{pw_1,pw_2,\ldots,\}$ be the set of all possible user-chosen passwords. We will assume that passwords are sorted so that $pw_i$ represents the $i$'th most popular password. Let $\Pr[pw_i]$ denote the probability that a random user selects password $pw_i$ we have a distribution over $\mathcal{P}$ with $\Pr[pw_1] \geq \Pr[pw_2] \geq \ldots $ and $\sum_i \Pr[pw_i] = 1$. 

\revision{The distributions we consider in our empirical analysis have a compressed representation. In particular, we can partition the set of passwords  $\mathcal{P}$ into $n'$ equivalence sets $es_1, \ldots, es_{n'}$ such that for any $i$, $pw,pw' \in es_i$ we have $\Pr[pw] = \Pr[pw'] = p_i$. In all of the distributions we consider we will have $n' \ll \left| \mathcal{P}\right|$ allowing us to efficiently encode the distribution using $n'$ tuple $(|es_1|, p_1),\ldots, (|es_{n'}|, p_{n'})$ where  $p_i$ is the probability of any password in equivalence set $es_i$. We will also want to ensure that we can optimize our DAHash parameters in time proportional to $n'$ instead of $\left| \mathcal{P}\right|$. }

\vspace{-0.3cm}

\subsection{DAHash}\label{AA} 

{\bf Account Creation: } When a new user first register an account with user name $u$ and password $pw_u\in\mathcal{P}$ DAHash will first assign a hash cost parameter $k_u = \mathsf{GetHardness}(pw_u)$ based on the (estimated) strength of the user's password. We will then randomly generate a $L$ bit string $s_u\leftarrow\{0,1\}^L$ (a “salt”) then compute hash value $h_u=H\left(pw_u,s_u; k_u\right)$, at last store the tuple $\left(u,s_u, h_u\right)$ as the record for user $u$. The salt value $s_u$ is used to thwart rainbow attacks~\cite{C:Oechslin03} and $k_u$ controls the cost of hash function\footnote{We remark that the hardness parameter $k$ is similar to ``pepper'' \cite{manber1996simple} in that it is not stored on the server. However, the hardness parameter $k$ is distinct from pepper in that it is derived deterministically from the input password $pwd_u$. Thus, unlike pepper, the authentication server will not need to check the password for every possible value of $k$. }.  

 {\bf Authentication with DAHash:} Later, when user $u$ enters her/his password $pw_u'$, the server first retrieves the corresponding salt value $s_u$ along with the hash value $h_u$, runs $\mathsf{GetHardness}(pw_u')$ to obtain $k_u'$ and then checks whether the hash $h_u' = H(pw_u',s_u; ~k_u')$ equals the stored record $h_u$ before granting access. If $pw_u'= pw_u$ is the correct password then we will have $k_u'=k_u$ and $h_u'=h_u$ so authentication will be successful. Due to the collision resistance of cryptographic hash functions, a login request from someone claiming to be user $u$ with password $pw’_u\neq pw_u$ will be rejected. The account creation and authentication processes are formally presented in Algorithms \ref{alg:createaccount} and \ref{alg:authenticate} (see Appendix \ref{app:algorithm}).

 In the traditional (distribution oblivious) key-stretching mechanism $\mathsf{GetHardness}(pw_u)$ is a constant function which always returns the same cost parameter $k$. Our objective will be to optimize $\mathsf{GetHardness}(pw_u)$ to minimize the percentage of passwords cracked by an offline attacker. This must be done subject to any workload constraints of the authentication server and (optionally) minimum protection constraint, guiding the minimum acceptable key-stretching parameters for any password.

   The function $\mathsf{GetHardness}(pw_u)$ maps each password to a hardness parameter $k_u$ which controls the cost of evaluating our password hash function $H$.  For hash iteration based key-derivation functions such as PBKDF2 we would achieve cost $k_u$ by iterating the underling hash function $t = \Omega(k)$ times. By contrast, for an ideal memory hard function the full evaluation cost scales quadratically with the running time $t_u$ so we have $t_u = O\left( \sqrt{k_u}\right)$ i.e., the attacker will need to allocate $t_u$ blocks of memory for $t_u$ time steps. In practice, most memory hard functions will take the parameter $t$ as input directly. For simplicity, we will assume that the cost parameter $k$ is given directly and that the running time $t$ (and memory usage) is derived from $k$.

{\bf Remark.} We stress that the hardness parameter $k$ returned by $\mathsf{GetHardness}(pw_u)$ should not be stored on the server. Otherwise, an offline attacker can immediately reject an incorrect password guess $pw' \neq pw_u$ as soon as he/she observes that $k\neq \mathsf{GetHardness}(pw^\prime)$. Furthermore, it should not possible to directly infer $k_u$ from the hash value $h_u\leftarrow H(pw_u,s_u;~k_u)$. Any MHF candidate such as SCRYPT~\cite{Per09}, Argon2~\cite{Argon2} or DRSample~\cite{CCS:AlwBloHar17} will satisfy this property.  While the hardness parameter $k_u$ is not stored on the server, we do assume that an offline attacker who has breached the authentication server will have access to the function $\mathsf{GetHardness}(pw_u)$ (Kerckhoff's Principle) since the code for this function would be stored on the authentication server. Thus, given a password guess $pw'$ the attacker can easily generate the hardness parameter $k' = \mathsf{GetHardness}(pw')$ for any particular password guess. 

{\bf Defending against Side-Channel Attacks.}
   A side-channel attacker might try to infer the hardness parameter $k$ (which may in turn be correlated with the strength of the user's password) by measuring delay during a successful login attempt. We remark that for modern memory hard password hashing algorithms ~\cite{Per09,Argon2,CCS:AlwBloHar17} the cost parameter $k$ is modeled as the product of two parameters: memory and running time. Thus, it is often possible to increase (decrease) the cost parameter without affecting the running time simply by tuning the memory parameter\footnote{By contrast, the cost parameter for PBKDF2 and BCRYPT is directly proportional to the running time. Thus, if we wanted to set a high cost parameter $k$ for some groups of passwords we might have to set an intolerably long authentication delay~\cite{SP:BloHarZho18}.}. Thus, if such side-channel attacks are a concern the authentication server could fix the response time during authentication to some suitable constant and tune the memory parameter accordingly.  Additionally we might delay the authentication response for a fixed ammount of time (e.g., 250 milliseconds) to ensure that there is no correlation between response time and the user's password.
   
\vspace{-0.3cm}
\subsection{Rational Adversary Model}
We consider an untargeted offline adversary whose goal is to break as many passwords as possible. In the traditional authentication setting
an offline attacker who has breached the authentication server has access to all the data stored on the server, including each user's record $(u,s_u,h)$ and the code for hash function $H$ and for the function $\mathsf{GetHardness}()$. In our analysis we assume that $H$ can only be used as a black box manner (e.g., random oracle)  to return results of queries from the adversary and that attempts to find a collision or directly invert $H(\cdot)$ succeed with negligible probability. However, an offline attacker who obtains $(u,s_u,h)$ may still check whether or not $pw_u=pw'$ by setting $k' = \mathsf{GetHardness}(pw')$ and checking whether or not $h= H(pw',s_u;~k')$. The only limitation to adversary's success rate is the resource she/he would like to put in cracking users’ password. 

We assume that the (untargetted) offline attacker has a value $v=v_u$  for password of user $u$. For simplicity we will henceforth use $v$ for password value since the attacker is untargetted and has the same value $v_u=v$ for every user $u$. There are a number of empirical studies of the black market~\cite{CCS:Allodi17,goldForSilver,stockley2016} which show that cracked passwords can have substantial value e.g., Symantec reports that passwords generally sell for $\$4-\$30$~\cite{passwordBlackMarket} and \cite{stockley2016} reports that e-mail passwords typically sell for $\$1$ on the Dark Web. \revision{Bitcoin ``brain wallets'' provide another application where cracked passwords can have substantial value to attackers \cite{FC:VBCKM16}.}

We also assume that the untargetted attacker has a dictionary list which s/he will use as guesses of $pw_u$) e.g., the attacker knows $pw_i$ and $\Pr[pw_i]$ for each password $i$. However, the the attacker will not know the particular password $pw_u$ selected by each user $u$. Therefore, in cracking a certain user’s account the attacker has to enumerate all the candidate passwords and check if the guess is correct until there is a guess hit or the attacker finally gives up. We assume that the attacker is rational and would choose a  strategy that would maximize his/her expected utility. The attacker will need to repeat this process independently for each user $u$. In our analysis we will focus on an individual user's account that the attacker is trying to crack. 

\vspace{-0.3cm}

\section{Stackelberg Game}\label{sec: Game}
In this section, we use Stackelberg Game Theory~\cite{von2010market} to model the interaction between the authentication server and  an untargeted adversary so that we can optimize the DAHash cost parameters. In a Stackelberg Game the leader (defender) moves first and then the follower (attacker) plays his/her best response. In our context, the authentication server (leader) move is to specify the function $\mathsf{GetHardness}()$. After a breach the offline attacker (follower) can examine the code for $\mathsf{GetHardness}()$ and observe the hardness parameters that will be selected for each different password in $\mathcal{P}$. A rational offline attacker may use this knowledge to optimize his/her offline attack. We first formally define the action space of the defender (leader) and attacker (follower) and then we formally define the utility functions for both players. 
\vspace{-0.3cm}
\subsection{Action Space of Defender}\label{subsection:defender}
The defender's action is to implement the function $\mathsf{GetHardness}()$. The implementation must be efficiently computable, and the function must be chosen subject to maximum workload constraints on the authentication server. Otherwise, the optimal solution would simply be to set the cost parameter $k$ for each password to be as large as possible. In addition, the server should guarantee that each password is granted with at least some level of protection so that it will not make weak passwords weaker.

In an idealized setting where the defender knows the user password distribution we can implement the  function $\mathsf{GetHardness}(pw_u)$ as follows:  the authentication server first partitions all passwords into $\tau$ mutually exclusive groups $G_i$ with $i\in\{1,\cdots,\tau\}$ such that $\mathcal{P} = \bigcup_{i = 1}^\tau G_i$ and $\Pr[pw] > \Pr[pw']$ for every $pw \in G_i$ and $pw' \in G_{i+1}$. Here, $G_1$ will \revision{correspond} to the weakest group of passwords and $G_{\tau}$ corresponds to the group of strongest passwords. For each of the  $\lvert G_i\rvert$ passwords $pw \in G_i$ we assign the same hash cost parameter $k_i = \mathsf{GetHardness}(pw)$. 

\ignore{
  In practice, it may be overly optimistic to assume that the defender has perfect knowledge of the password distribution. Instead, we can partition passwords into groups $G_i$ based on estimates of the password strength. To estimate $\Pr[pw] $ we could use password strength meters or a differentially private count sketch which can provide a noisy estimate of the number of users who have selected the password $pw$. }

The cost of authenticating a password that is from $G_i$ is simply $k_i$. Therefore, the amortized server cost for verifying a correct password is: 
\begin{equation}\small
C_{SRV}=\sum_{i=1}^{\tau } k_i \cdot \Pr[pw \in G_i],
\end{equation}
where $\Pr[pw\in G_i] = \sum_{pw\in G_i}Pr[pw]$ is total probability mass of passwords in group $G_i$. In general, we will assume that the server has a maximum amortized cost $C_{max}$ that it is willing/able to incur for user authentication. Thus, the authentication server must pick the hash cost vector $\vec{k}=\{k_1,k_2,\cdots,k_{\tau}\}$ subject to the cost constraint $C_{SRV}\leq C_{max}$.   \revision{Additionally, we require that $k(pw_i) \geq k_{min}$ to ensure a minimum acceptable level of protection for all accounts.} The attacker will need to repeat this process independently for each user $u$. Thus, in our analysis we can focus on an individual user's account that the attacker is trying to crack. 
\vspace{-0.3cm}

\ignore{

\subsubsection{Implementing $\mathsf{GetHardness}()$ in Practice}{\color{blue} 
An idealized implementation of $\mathsf{GetHardness}$ would require the defender to work with all passwords whenever an account creation request occurs. However, In practice the authentication server does not have the entire password corpus at hand in the very beginning, it has to process users’ passwords in the order that users register their accounts. Thus, $\mathsf{GetHardness}()$ should work like a stream algorithm in a practical setting. $\mathsf{GetHardness}()$ consists two steps, classify the input password into a strength group and assign corresponding hash cost, we discuss their practical implementations separately.

For classification step of $\mathsf{GetHardness}$, since the server is not able to partition all passwords ex ante in practice, let us consider other ways of classifying passwords. Using prevalent strength meters such as $\mathsf{zxcvbn}$ (we use this password strength estimation libraries in our experiment) and $\mathsf{KeePass}$, we can classify passwords into different strength groups based on predefined criterions with respect to some measurement (eg., estimated entropy, crack time). In particular,
we would implicitly classify passwords into groups based on an estimate $x = \mathsf{EstimatedEntropy}(pwd)$ or $x=\mathsf{EstimatedCrackTime}(pwd)$. Fixing some thresholds $x_0 = 0 \leq x_1 \leq \ldots \leq x_{\tau-1} = \infty$ we could define $G_i = \{ pwd~: x_{i-1} < x \leq x_i\}$ to be the set of all passwords whose estimated entropy/crack time lies in the interval $(x_{i-1},x_i]$. 

Apart from $\mathsf{zxcvbn}$ and $\mathsf{KeePass}$,  the authentication server could also maintain a private count sketch data structure~\cite{schechter2010popularity,C:WCDJR17} to help classify each new password.  The authentication server could use the existing password data to train a count-min (or count-median) sketch. The sketch is later used to estimate password probability (strictly speaking, it is frequency that is being estimated) when a new user registers his/her account. Based on the estimation the server can classify the new user’s password into a certain group. When creating a count sketch, Laplace noise could be added to ensure differential privacy. Besides, maintaining a count sketch is beneficial in that it can remind a user who had picked a frequently used password of risks of her/his poor password choice.  

There are other sophisticated tools that an adversary would use to estimate password strength, e.g., Probabilistic Context Free Grammars~\cite{SP:WAMG09,SP:KKMSVB12,NDSS:VerColTho14}, Markov Chain Models~\cite{NDSS:CasDurPer12,Castelluccia2013,SP:MYLL14,USENIX:USBCCKKMMS15}  and even neural networks~\cite{USENIX:MUSKBCC16}. 

As for practical hash cost assignment step, when the whole password data is not available to derive the optimal hash costs, the authentication server could use password data present in the server to derive $k_i$ that is optimal for those data, and use $k_i$ for future users. When a returned user logs in his/her account, the server could update the hash cost.}

}
 
\subsection{Action Space of Attacker}
After breaching the authentication server the attacker may run an offline dictionary attack. The attacker must fix an ordering $\pi$ over passwords $\mathcal{P}$ and a maximum number of guesses $B$ to check i.e., the attacker will check the first $B$ passwords in the ordering given by $\pi$.  If $B=0$ then the attacker gives up immediately without checking any passwords and if $B= \infty$ then the attacker will continue guessing until the password is cracked. The permutation $\pi$ specifies the order in which the attacker will guess passwords, i.e., the attacker will check password $pw_{\pi(1)}$ first then $pw_{\pi(2)}$ second etc... Thus, the tuple $(\pi,B)$ forms a \emph{strategy} of the adversary. Following that strategy 
the probability that the adversary succeeds in cracking a random user’s password is simply sum of probability of all passwords to be checked:
\begin{equation}\small
P_{ADV}=\lambda(\pi,B)=\sum_{i=1}^B p_{\pi(i)}\ .
\end{equation}
Here, we use short notation $p_{\pi(i)} = \Pr[pw_{\pi(i)}]$ which denotes the probability of the $i$th password in the ordering $\pi$. 

\subsection{Attacker’s Utility}
Given the estimated average value for one single password $v$ the expected gain of the attacker is simply $v \times \lambda(\pi,B)$ i.e., the probability that the password is cracked times the value $v$. Similarly, given a hash cost parameter vector $\vec{k}$ the expected cost of the attacker is 
$\sum^B_{i=1} k(pw_{\pi(i)})\cdot \left(1-\lambda(\pi,i-1)\right).$
We use the shorthand $k(pw) = k_i = \mathsf{GetHardness}(pw)$ for a password $pw \in G_i$. 
Intuitively, the probability that the first $i-1$ guesses are incorrect is  $\left(1-\lambda(\pi,i-1)\right)$ and we incur cost $k(pw_{\pi(i)})$ for the $i$'th guess if and only if the first $i-1$ guesses are incorrect. Note that $\lambda(\pi,0)=0$ so the attacker always pays cost $k(pw_{\pi(1)})$ for the first guess.
The adversary’s expected utility is the difference of expected gain and expected cost:
\begin{equation}\small
\begin{aligned}
&U_{ADV}\left(v,\vec{k},(\pi,B)\right)=v\cdot \lambda(\pi,B)-\sum^B_{i=1} k(pw_{\pi(i)})\cdot \left(1-\lambda(\pi,i-1)\right).
\end{aligned}
\end{equation}
\vspace{-0.5cm}

\subsection{Defender’s Utility} After the defender (leader) moves the offline attacker (follower) will respond with his/her utility optimizing strategy. We let $P_{ADV}^* $ denote the probability that the attacker cracks a random user's password when playing his/her optimal strategy. 
\begin{equation}\small
P_{ADV}^* = \lambda(\pi^*,B^*)\ ,  ~~~\mbox{where~~~} (\pi^*,B^*)=\arg \max_{\pi, B} U_{ADV}\left(v,\vec{k},(\pi,B)\right).
\end{equation}
$P_{ADV}^*$ will depend on the attacker's utility optimizing strategy which will in turn depend on value $v$ for a cracked password, the chosen cost parameters $k_i$ for each group $G_i$, and the user password distribution. Thus, we can define the authentication server’s utility as
\begin{equation}\small
U_{SRV}(\vec{k},v)=-P_{ADV}^* \ .
\end{equation} 

The objective of the authentication is to minimize the success rate $P_{ADV}^*(v,\vec{k})$ of the attacker by finding the optimal action i.e., a good way of  partitioning passwords into groups and selecting the optimal hash cost vector $\vec{k}$.  
Since the parameter $\vec{k}$ controls the cost of the hash function in passwords storage and authentication, we should increase $k_i$ for a specific group $G_i$ of passwords only if this is necessary to help deter the attacker from cracking passwords in this group $G_i$. The defender may not want to waste too much resource in protecting the weakest group $G_1$ of passwords when password value is high because they will be cracked easily regardless of the hash cost $k_1$.
\vspace{-0.3cm}
\subsection{Stackelberg Game Stages}
Since adversary’s utility depends on $(\pi,B)$ and $\vec{k}$, wherein $(\pi,B)$ is the responses to server’s predetermined hash cost vector $\vec{k}$. On the other hand, when server selects different hash cost parameter for different groups of password, it has to take the reaction of potential attackers into account.
Therefore, the interaction between the authentication server and the adversary can be modeled as a two stage Stackelberg Game. Then the problem of finding the optimal hash cost vector is reduced to the problem of computing the equilibrium of  Stackelberg game.

\ignore{
We will assume that the value $v$ is available to the defender before the game begins. For example, the defender might estimate $v$ from black market reports~\cite{CCS:Allodi17,goldForSilver,stockley2016} \footnote{We discuss the issue of robustness to inexact estimations of $v$ in \ref{sec:empirical}. Briefly, we find that the optimal vector $\vec{k}$ is (usually) not sensitive to moderate changes in the value $v$ which means that our mechanism will be expected to work well even when we only have loose estimates of $v$.} Similarly, we assume that the attacker know $\Pr[pwd_i]$ for each password $pwd_i\in \mathcal{P}$ and the attacker will learn the vector $\vec{k}$ of hash cost parameters after the server is breached as well as the particular groups $G_1,\ldots, G_\tau$ (typically represented implicitly).
}

In the Stackelberg game, the authentication server (leader) moves first (stage I); then the adversary follows (stage II). In stage I, the authentication server commits hash cost vector $\vec{k}=\{k_1,\cdots k_{\tau}\}$ for all groups of  passwords; in stage II, the adversary yields the optimal strategy $(\pi,B)$ for cracking a random user’s password. 
Through the interaction between the legitimate authentication server and the untargeted adversary who runs an offline attack, there will emerge an equilibrium in which no player in the game has the incentive to unilaterally change its strategy.  Thus, an equilibrium strategy profile $\left\{\vec{k}^*,(\pi^*,B^*)\right\}$ must satisfy
\begin{equation}\small
\begin{cases}
    U_{SRV}\left(\vec{k}^*,v\right)\geq U_{SRV}\left(\vec{k},v\right),& \forall \vec{k} \in \mathcal{F}_{C_{max}} ,\\
    U_{ADV}\left(v,\vec{k}^*,(\pi^*,B^*)\right)\geq U_{ADV}\left(v,\vec{k}^*,(\pi,B)\right),  &\forall(\pi,B)
\end{cases}
\end{equation}
Assuming that the grouping $G_1,\ldots,G_\tau$ of passwords is fixed. The computation of equilibrium strategy profile can be transformed to solve the following optimization problem, where $\Pr(pw_i)$, $G_1,
\cdots, G_{\tau}$, $C_{max}$ are input parameters and  $(\pi^*,B^*)$ and $\vec{k}^*$ 
are variables.
\begin{equation}\small
\label{eq:utility}
\begin{aligned}
\min_{\vec{k}^*, \pi^*, B*}& \lambda(\pi^*,B^*)\\
 \textrm{s.t.} \quad
& U_{ADV}\left(v,\vec{k},(\pi^*,B^*)\right) \geq U_{ADV}\left(v,\vec{k},(\pi,B)\right),~~\forall (\pi,B), \\
&\sum_{i=1}^{\tau } k_i \cdot \Pr[pw \in G_i] \leq C_{max},\\
& k_i \geq k_{min},   \mbox{~$\forall i \leq \tau$}.
\end{aligned}
\end{equation}

The solution of the above optimization problem is the equilibrium of our Stackelberg game. The first constraint implies that adversary will play his/her utility optimizing strategy i.e., given that the defender's action $\vec{k}^*$ is fixed the utility of the strategy $(\pi^*,B^*)$ is at least as large as any other strategy the attacker might follow. Thus, a rational attacker will check the first $B^*$ passwords in the order indicated by $\pi^*$ and then stop cracking passwords. The second constraint is due to resource limitations of authentication server. The third constraint sets lower-bound for the protection level. In order to tackle the first constraint, we need to specify the optimal checking sequence and the optimal number of passwords to be checked. 
\vspace{-0.3cm}

\section{Attacker and Defender Strategies} \label{sec:Analysis}
In the first subsection, we give an efficient algorithm to compute the attacker’s optimal strategy $(\pi^*,B^*)$ given the parameters $v$ and $\vec{k}$. This algorithm in turn is an important subroutine in our algorithm to find the best stragety $\vec{k}^*$ for the defender. 
\vspace{-0.3cm}
\subsection{Adversary’s Best Response (Greedy)}
In this section we show that the attacker's optimal ordering $\pi^*$ can be obtained by sorting passwords by their ``bang-for-buck'' ratio.  \revision{In particular, fixing an ordering $\pi$} we define the ratio $r_{\pi(i)}=\frac{p_{\pi(i)}}{k(pw_{\pi(i)})}$ which can be viewed as the priority of checking password $pw_{\pi(i)}$ i.e., the cost will be $k(pw_{\pi(i)})$ and the probability the password is correct is $p_{\pi(i)}$. \revision{Intuitively, the attacker's optimal strategy is to order passwords by their ``bang-for-buck'' ratio guessing passwords with higher checking priority first. Theorem \ref{thm:noinversions} formalizes this intuition by proving that the optimal checking sequence $\pi^*$ has no inversions.  }

We say a checking sequence $\pi$ has an \emph{inversion} with respect to $\vec{k}$ \revision{ if for some pair $a > b$ we have $r_{\pi(a)}>r_{\pi(b)}$ i.e., $pw_{\pi(b)}$ is scheduled to be checked before $pw_{\pi(a)}$ even though password $pw_{\pi(a)}$ has a higher ``bang-for-buck'' ratio. Recall that $pw_{\pi(b)}$ is the $b$'th password checked in the ordering $\pi$. }  \revision{The proof of Theorem \ref{thm:noinversions} can be found in the appendix \ref{app:proof}. Intuitively, we argue that consecutive inversions can always be swapped without decreasing the attacker's utility.}

\newcommand{\thmnoinversions}{\revision{Let $(\pi^*,B^*)$ denote the attacker's optimal strategy with respect to hash cost parameters $\vec{k}$ and let $\pi$ be an ordering with no inversions relative to $\vec{k}$ then  \[ U_{ADV}\left(v,\vec{k},(\pi,B^*)\right) \geq U_{ADV}\left(v,\vec{k},(\pi^*,B^*)\right) \ . \]}}
\begin{theorem}
\label{thm:noinversions}
\thmnoinversions
\end{theorem}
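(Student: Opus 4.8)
The plan is to prove the theorem by an exchange (bubble‑sort) argument over adjacent transpositions. Two preliminary reductions make this clean. Fixing the budget at $B^*$, the gain term $v\cdot\lambda(\pi,B^*)$ depends only on the \emph{set} $\{pw_{\pi(1)},\dots,pw_{\pi(B^*)}\}$ of the first $B^*$ passwords, and the value $U_{ADV}(v,\vec k,(\pi,B^*))$ does not depend at all on the order of the passwords sitting in positions $B^*+1,B^*+2,\dots$. So it suffices to understand what happens to the attacker's utility when we permute the passwords in the first $B^*$ positions, and I will track this one adjacent swap at a time.

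The crux is a one‑line computation. Suppose positions $i,i+1$ with $i+1\le B^*$ hold passwords with probability/cost/ratio triples $(p,k,r)$ and $(p',k',r')$, and let $\pi'$ be $\pi$ with these two positions transposed. In the cost sum $\sum_{j=1}^{B^*}k(pw_{\pi(j)})\bigl(1-\lambda(\pi,j-1)\bigr)$ only the $j=i$ and $j=i+1$ terms change (for $j>i+1$ neither $k(pw_{\pi(j)})$ nor $\lambda(\pi,j-1)$ moves), and expanding them gives
\[ U_{ADV}(v,\vec k,(\pi',B^*))-U_{ADV}(v,\vec k,(\pi,B^*)) \;=\; k\,p'-k'\,p \;=\; k\,k'\,(r'-r), \]
which is nonnegative exactly when $r\le r'$, i.e.\ exactly when the pair at $(i,i+1)$ formed an inversion. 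Using the number of inversions among the first $B^*$ positions as a potential function that strictly decreases at each such swap, I obtain after finitely many swaps an ordering $\pi_1$ with $U_{ADV}(v,\vec k,(\pi_1,B^*))\ge U_{ADV}(v,\vec k,(\pi^*,B^*))$ whose first $B^*$ entries are sorted by non‑increasing ratio; sorting the tail of $\pi_1$ as well — a move that by the second reduction changes nothing — yields $\pi_2$ whose length‑$B^*$ prefix and whose suffix are each internally inversion‑free.

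The remaining obstacle, and the only step that actually invokes the optimality of $B^*$, is to show the two sorted blocks abut without an inversion, i.e.\ $r_{\pi_2(B^*)}\ge r_{\pi_2(B^*+1)}$. Since no swap decreased the utility, $(\pi_2,B^*)$ is still an optimal strategy, hence weakly beats both $(\pi_2,B^*-1)$ and $(\pi_2,B^*+1)$; writing these two inequalities out and dividing by the positive hash costs gives $v\,r_{\pi_2(B^*)}\ge 1-\lambda(\pi_2,B^*-1)$ and $v\,r_{\pi_2(B^*+1)}\le 1-\lambda(\pi_2,B^*-1)-\Pr[pw_{\pi_2(B^*)}]$, so $r_{\pi_2(B^*)}>r_{\pi_2(B^*+1)}$ whenever $v>0$ and $\Pr[pw_{\pi_2(B^*)}]>0$; since the suffix of $\pi_2$ is already sorted this rules out inversions anywhere. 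The degenerate cases $v=0$ and $B^*=0$ are immediate, since then $\lambda(\pi,B^*)=0$ and $U_{ADV}(v,\vec k,(\pi,B^*))=0$ for every $\pi$. Therefore $\pi_2$ is inversion‑free and $U_{ADV}(v,\vec k,(\pi_2,B^*))\ge U_{ADV}(v,\vec k,(\pi^*,B^*))$.

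Finally I would upgrade ``there exists an inversion‑free ordering that is at least as good'' to the statement for the arbitrary inversion‑free $\pi$ in the theorem. When all ratios are distinct the inversion‑free ordering is unique and we are done; in general, two inversion‑free orderings differ only by permuting equal‑ratio passwords within a contiguous block, the same first‑order argument as above shows an optimal $B^*$ never falls strictly inside such a block, and a short computation (using $p=rk$ throughout the block, so that the cross‑term $\sum_{l<j}k_lk_j$ is order‑independent) shows that permuting equal‑ratio passwords within a block leaves $U_{ADV}(v,\vec k,(\cdot,B^*))$ unchanged. Hence every inversion‑free $\pi$ attains exactly the utility of $\pi_2$ at budget $B^*$, completing the proof. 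I expect this last bookkeeping — the interface between guessed and unguessed passwords together with ties in the bang‑for‑buck ratio — to be the only delicate point; the adjacent‑swap inequality itself drops out immediately once it is set up.
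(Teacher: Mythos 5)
Your proof is correct and follows essentially the same route as the paper's: an adjacent-transposition exchange argument in which swapping a consecutive inversion changes the attacker's utility by $k\,p' - k'\,p = k\,k'\,(r'-r)\ \geq 0$. Your extra bookkeeping at the guessed/unguessed boundary (via optimality of $B^*$) and for equal-ratio ties is sound and, if anything, more careful than the paper's sketch of the argument.
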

\vspace{-0.4cm}
Theorem \ref{thm:noinversions} gives us {\em an easy way to compute} the attacker's optimal ordering $\pi^*$ over passwords \revision{i.e.,  by sorting passwords according to their ``bang-for-buck'' ratio.} \revision{It remains to find the attacker's optimal guessing budget $B^*$}. \revision{As we previously mentioned the password distributions we consider can be compressed by grouping passwords with equal probability into equivalence sets.} \revision{Once we have our cost vector $\vec{k}$ and have implemented \hard{}  we can further partition password equivalence sets such that passwords in each set additionally have the same bang-for-buck ratio.
 Theorem \ref{corollary} tells us that the optimal attacker strategy will either guess {\em all} of the passwords in such an equivalence set $ec_j$ or {\em none} of them. Thus, when we search for $B^*$ we only need to consider $n^{\prime}+1$ possible values of this parameter. } We will use this observation to improve the efficiency of our algorithm to compute the optimal attacker strategy. 

\newcommand{\thmcompact}{Let $(\pi^*,B^*)$ be the optimal strategy of \revision{the} adversary and given two passwords $pw_i$ and $pw_j$ in the same equivalence set.  Then
\begin{equation}
\label{eq:theorem2}
\mathsf{Inv}_{\pi^*}(i)\leq B^* \Leftrightarrow \mathsf{Inv}_{\pi^*}(j)\leq B^* \ . 
\end{equation}}

\newcommand{\maincorollary}{Let $(\pi^*,B^*)$ denote the attacker's optimal strategy with respect to hash cost parameters $\vec{k}$. Suppose that passwords can be partitioned into $n$ equivalence sets $es_1,\ldots, es_{n^{\prime}}$ such that passwords $pw_a, pw_b \in es_i$ have the same probability and hash cost i.e., $p_a=p_b = p^i$ and $k(pw_a) = k(pw_b)= k^i$. Let $r^i = p^i/k^i$ denote the bang-for-buck ratio of equivalence set $es_i$ and assume that $r^1 \geq r^2 \geq \ldots \geq r_{n^{\prime}}$ then  $B^* \in \left\{0, |es_1| ,|es_1|+|es_2|,\cdots ,\sum_{i=1}^{n^{\prime}}|es_i|\right\}$.}
\begin{theorem}\label{corollary}
\revision{\maincorollary }
\end{theorem}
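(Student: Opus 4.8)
The plan is to leverage Theorem~\ref{corollary}'s companion statement (the "$\mathsf{Inv}_{\pi^*}$" claim, which I'll denote Theorem~2 below) together with Theorem~\ref{thm:noinversions}, so the real work is to establish that within any single equivalence set the optimal attacker either guesses all passwords or none. Fix the optimal strategy $(\pi^*, B^*)$. By Theorem~\ref{thm:noinversions} we may assume $\pi^*$ has no inversions, hence it lists passwords in non-increasing order of their bang-for-buck ratio $r$. Since all passwords in a given set $es_i$ share the same probability $p^i$ and the same hash cost $k^i$, they share the ratio $r^i$, and a no-inversion ordering is free to place them consecutively; moreover any two no-inversion orderings that differ only by permuting within ratio-ties give the attacker the same utility, so WLOG $\pi^*$ checks the sets in the block order $es_1, es_2, \ldots$. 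Under this ordering the set $es_i$ occupies positions $\left(\sum_{\ell<i}|es_\ell|\right)+1$ through $\sum_{\ell\le i}|es_\ell|$.

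The heart of the argument is a local exchange/convexity observation about $B^*$ restricted to the block of positions belonging to one set $es_i$. Suppose $B^*$ lands strictly inside this block, say it includes $t$ of the $|es_i|$ passwords with $0 < t < |es_i|$. I would compare $U_{ADV}$ at guessing budgets $B^*$, $B^* + 1$, and $B^* - 1$. Writing $q = 1 - \lambda(\pi^*, B^* - t)$ for the probability that the attack reaches the start of this block and using that each of the next passwords contributes marginal gain $v\cdot p^i$ and marginal expected cost $k^i$ times the survival probability at that point, the marginal utility of extending the budget from $B^*$ to $B^*+1$ is
\begin{equation}
\Delta_+ = q\bigl(1-\lambda_{\mathrm{rel}}(t)\bigr)\bigl(v\, p^i - k^i\bigr),
\end{equation}
where $\lambda_{\mathrm{rel}}(t) = t\, p^i/(1-\lambda(\pi^*,B^*-t))\cdot(\ldots)$ is the conditional probability of a hit among the first $t$ passwords of the block; the key point is only that the bracket $\bigl(v\,p^i - k^i\bigr)$ has a fixed sign throughout the block and the prefactor is strictly positive. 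Likewise the marginal utility of shrinking from $B^*$ to $B^*-1$ has the opposite sign of $\bigl(v\,p^i - k^i\bigr)$. Optimality of $B^*$ forces both $\Delta_+ \le 0$ and $\Delta_- \le 0$, which is only possible if $v\,p^i - k^i = 0$; but in that degenerate case we may push $B^*$ to either the start or the end of the block without changing the utility, so an optimal budget landing at a block boundary exists. In all cases we conclude $B^*$ can be taken to be a sum of a prefix of the $|es_\ell|$'s, i.e. $B^* \in \{0, |es_1|, |es_1|+|es_2|, \ldots, \sum_{i=1}^{n'}|es_i|\}$.

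I expect the main obstacle to be bookkeeping the conditional probabilities cleanly: the marginal-cost term at position $j$ is $k(pw_{\pi^*(j)})$ times the survival probability $1-\lambda(\pi^*, j-1)$, and one has to argue that across the whole block this product, when summed, still yields the clean telescoped form above so that the sign of $v\,p^i - k^i$ really is decisive. An alternative, possibly cleaner, route avoids explicit formulas entirely: invoke Theorem~2 directly — it says $\mathsf{Inv}_{\pi^*}(a) \le B^* \Leftrightarrow \mathsf{Inv}_{\pi^*}(b) \le B^*$ for $pw_a, pw_b$ in the same equivalence set — which immediately means the indicator "position $\le B^*$" is constant on each block, so $B^*$ is a union of whole blocks and hence a prefix sum of the $|es_i|$. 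If we are permitted to cite Theorem~2, the corollary reduces to this one-line observation plus the ordering normalization from Theorem~\ref{thm:noinversions}; I would present that as the primary proof and relegate the convexity computation to a remark justifying Theorem~2 itself.
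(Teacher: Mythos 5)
Your overall plan --- normalize $\pi^*$ to a no-inversion ordering via Theorem~\ref{thm:noinversions} so each equivalence set occupies a contiguous block, then show the optimal budget cannot stop strictly inside a block --- is the same route the paper takes. But the central computation in your second paragraph is wrong. From the definition of $U_{ADV}$, the marginal utility of extending the budget from $B$ to $B+1$ inside a block of $es_i$ is
\[ g(B+1) \;=\; U_{ADV}(v,\vec{k},(\pi^*,B+1)) - U_{ADV}(v,\vec{k},(\pi^*,B)) \;=\; v\,p^i \;-\; k^i\bigl(1-\lambda(\pi^*,B)\bigr), \]
because the gain term $v\lambda(\pi^*,B)$ increments by the \emph{unconditional} probability $v\,p^i$, while only the cost is discounted by the survival probability. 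This expression does not factor as $(\text{positive prefactor})\cdot(v\,p^i-k^i)$, and its sign is \emph{not} fixed across the block --- indeed $g(B+1)-g(B) = k^i p^i > 0$, so the marginal utility is strictly increasing as you move through the block. Your claim that ``the bracket $(v\,p^i-k^i)$ has a fixed sign throughout the block'' is therefore false, and the ensuing deduction that an interior optimum forces $v\,p^i = k^i$ does not follow. The irony is that the correct formula gives a cleaner conclusion: if $B^*$ were strictly interior then optimality would require $g(B^*)\ge 0 \ge g(B^*+1)$, contradicting strict monotonicity of $g$ within the block (when $p^i,k^i>0$; passwords with $p^i=0$ are never profitably guessed). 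So interior optima are impossible outright and no degenerate tie-breaking case arises.

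Your proposed ``cleaner alternative'' --- citing the statement $\mathsf{Inv}_{\pi^*}(a)\le B^* \Leftrightarrow \mathsf{Inv}_{\pi^*}(b)\le B^*$ as a black box --- is not available: that all-or-nothing claim is precisely the content being established here (the paper proves it as the intermediate step in the appendix), so invoking it assumes what is to be shown. One further point of care that your write-up glosses over: before the block argument you need to justify that restricting attention to a no-inversion ordering is without loss for the purpose of locating $B^*$, i.e.\ that there exists an \emph{optimal} strategy whose ordering lists each $es_i$ contiguously; Theorem~\ref{thm:noinversions} gives that a no-inversion reordering does not decrease utility at the same budget $B^*$, and ties in the ratio $r^i$ can be broken to keep each equivalence set consecutive, so this is fine but should be stated.
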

The proof of both theorems can be found in Appendix \ref{app:proof}.  Theorem \ref{corollary} implies that when cracking users’ accounts the adversary increases number of guesses $B$ by the size of the next equivalence set (if there is net profit by doing so). Therefore, the attacker finds the optimal strategy $(\pi^*, B^*)$  with Algorithm $\mathsf{BestRes}(v, \vec{k}, D)$ in time $\mathcal{O}(n^{\prime}\log n^{\prime})$ \revision{ --- see Algorithm \ref{alg:response} in Appendix \ref{app:algorithm}. The running time is dominated by the cost of sorting our $n^{\prime}$ equivalence sets. }
\vspace{-0.3cm}

\ignore{
\subsubsection{Compact Representation of Probability Distribution}
As we remarked previously when we use an empirical password frequency corpus to model the password distribution we will typically obtain a compact representation i.e., the tuple $(p_j, c_j)$ indicates that there are $c_j$ different passwords which all occur with probability $p_j$ and corresponds to to an equivalence class $E_j \subseteq \mathcal{P}$. Suppose that we ensure that our partition of the passwords $\mathcal{P}$ into groups $G_1,\ldots,G_{\tau}$ maintains the invariant that passwords from the same equivalence class remain in the same group i.e., $E_j \subseteq G_i$ for some $i\leq \tau$. Assuming that this is the case we remark that any pair of elements in an equivalence class will have the bang-for-buck ratio and Theorem \ref{thm:compact} tells us that an optimal attacker will either guess {\em all} of the password in an equivalence class $E_j$ or {\em none} of them. This helps to reduce the search space for the attacker's optimal strategy (from $O(N)$ to $O(n')$) in our empirical experiments. 

Theorem \ref{thm:compact} tells us that the optimal attacker strategy will either guess {\em all} of the passwords in an equivalence class $E_j$ or {\em none} of them. In particular, let $\mathsf{Inv}_{\pi}$  is the inverse map of $\pi$ (e.g., $\mathsf{Inv}_{\pi}(\pi(i))=i=\pi\left(\mathsf{Inv}_{\pi}(i) \right)$) and note that an attacker who plays strategy $(\pi,B)$ will check password $pwd_i$ if and only if $\mathsf{Inv}_{\pi}(i) \leq B$. Theorem \ref{thm:compact} states that for the optimal strategy  $(\pi^*,B^*)$ we have $\mathsf{Inv}_{\pi^*}(i)\leq B^* \Leftrightarrow \mathsf{Inv}_{\pi^*}(j)\leq B^*$. We will use this observation to improve the efficiency of our greedy algorithm to compute the optimal attacker strategy. 

\begin{corollary}\label{corollary}
The optimal passwords checking number $B^*$ lies in the list $S=\{0, c_1,c_1+c_2,\cdots ,\sum_{i=1}^{n'}c_i\}$.
\end{corollary}
When cracking uses’ accounts the adversary increases number of guesses $B$ by the size of the next equivalence class (if there is net profit by doing so). In short, the adversary checks passwords class by class instead of one by one.

\begin{theorem} \label{thm:compact}
Let $(\pi^*,B^*)$ be the optimal strategy of adversary and given two passwords $pwd_i$ and $pwd_j$ in the same equivalence class.  Then
\begin{equation}\label{eq:theorem2}
\mathsf{Inv}_{\pi^*}(i)\leq B^* \Leftrightarrow \mathsf{Inv}_{\pi^*}(j)\leq B^* \ . 
\end{equation}
\end{theorem}
This theorem implies that passwords in the same equivalence class are essentially identical from the perspective of the adversary. If the optimal strategy indicates to check a password in an equivalence class, then checking all the passwords in that equivalence class is also optimal. Immediately follows Corollary \ref{corollary}.

\begin{corollary}\label{corollary}
The optimal passwords checking number $B^*$ lies in the list $S=\{0, c_1,c_1+c_2,\cdots ,\sum_{i=1}^{n'}c_i\}$.
\end{corollary}
When cracking users’ accounts the adversary increases number of guesses $B$ by the size of the next equivalence class (if there is net profit by doing so). In short, the adversary checks passwords class by class instead of one by one.

Base on the discussion of last subsections, we develop a greedy algorithm to compute the adversary’s success rate given a particular hash cost vector to which the authentication serve committed to in stage I. We sort checking priority of each equivalence class $\frac{p_i}{k_i}$ for $i \leq n'$ in descending order and reindex them such that  $\frac{p_1}{k_1}\geq\cdots\geq\frac{p_{n’}}{k_{n’}}$. Then we iterate through the list $S$ and find the element leading to maximum utility as the checking number in adversary’s best response. The details of the greedy algorithm can be found as in Algorithm \ref{alg:response} $\mathsf{BestRes}(\vec{k}, v)$. 
}

\subsection{The Optimal Strategy of Selecting Hash Cost Vector} \label{subsec:OptimizingCostVector}
\revision{In the previous section we showed that there is an efficient greedy algorithm $\mathsf{BestRes}(v, \vec{k}, D)$ which takes as input a cost vector $\vec{k}$, a value $v$ and a (compressed) description of the password distribution $D$ computes the the  attacker's best response $(\pi^*, B^*)$ and outputs $\lambda(\pi^*, B^*)$ --- the fraction of cracked passwords. Using this algorithm $\mathsf{BestRes}(v, \vec{k}, D)$ as a blackbox we can apply derivative-free optimization to  the optimization problem in equation (\ref{eq:utility}) to find a good hash cost vector $\vec{k}$ which minimizes the objective $\lambda(\pi^*, B^*)$} There are many derivative-free optimization solvers available in the literature \cite{rios2013derivative}, generally they fall into two categorizes, deterministic algorithms (such as Nelder-Mead) and evolutionary algorithm (such as BITEOPT\cite{biteopt} and CMA-EA). \revision{ We refer our solver to as $\mathsf{OptHashCostVec}(v,C_{max}, k_{min}, D)$. The algorithm takes as input the parameters of the optimization problem (i.e., password value $v$, $C_{max}$, $k_{min}$, and a (compressed) description of the password distribution $D$) and outputs an optimized hash cost vector $\vec{k}$. }

 During each iteration of  $\mathsf{OptHashCostVec}(\cdot)$, some candidates $\{\vec{k}_{c_i}\}$ are proposed,  together they are referred as \emph{population}. For each candidate solution $\vec{k}_{c_i}$  we use our greedy algorithm $\mathsf{BestRes}(v, \vec{k}_{c_i}, D)$ to compute the attacker's best response $(\pi^*, B^*)$ \revision{  i.e., fixing any feasible cost vector $\vec{k}_{c_i}$ we can compute the corresponding value of the objective function  $P_{adv,\vec{k}_{c_i}} := \sum_{i=1}^{B^*}p_{\pi^{*}(i)}$.} We record the corresponding success rate $P_{adv,\vec{k}_{c_i}}$ of the attacker as ``fitness''.  At the end of each iteration, the population is updated according to fitness of its' members, the update could be either through deterministic transformation (Nelder-Mead) or randomized evolution (BITEOPT, CMA-EA). When the iteration number reaches a pre-defined value $ite$, the best fit member $\vec{k}^*$ and its fitness $P_{adv}^*$ are returned.
\vspace{-0.3cm}


\ignore{
\subsection{Algorithms}
Algorithm \ref{alg:response} shows how to efficiently compute the adversary's best response given cost parameters $\vec{k}$. Algorithm \ref{alg:response} must be quick since it is called multiple times in our brute-force search to find the defender's optimal strategy $\vec{k}^*$.

\ignore{
\begin{algorithm}[t]
\caption{The adversary’s best response given $\vec{k}$}\label{alg:euclid}
\begin{algorithmic}[1]
\Require{$\vec{k}$, $\{(p_i,c_i)\}$, $v$}
\Ensure{$B_{max}$, $P_{max}$, $U_{max}$}
\State $P_{ADV}\leftarrow0$;
\State $P_{max}\leftarrow0$;
\State $B\leftarrow 0$;
\State $B_{max}\leftarrow 0$;
\State $U_{ADV}\leftarrow 0$;
\State $U_{max}\leftarrow 0$;

\State sort $\{\frac{p_i}{k_i}\}$ and reindex $\{(p_i,c_i)\}$ such that  $\frac{p_1}{k_1}\geq\cdots\geq\frac{p_{n’}}{k_{n’}}$;
\For{$i=1$ to $n’$}
\State $P_{ADV}\leftarrow P_{ADV}+p_i \cdot c_i$;
\State 
$\Delta \leftarrow v p_i c_i+k_i{c_i-1\choose2}p_i-k_ic_i\cdot \left(1-\lambda(\pi, B)\right)$;
\State $U_{ADV}\leftarrow U_{ADV}+\Delta$;
\State $B\leftarrow B+c_i$;
\If{$U_{ADV}>U_{max}$};
\State $B_{max}\leftarrow B$;
\State $P_{max}\leftarrow P_{ADV}$;
\State $U_{max}\leftarrow U_{ADV}$;
\EndIf
\EndFor
\State \textbf{return} $B_{max}$, $P_{max}$, $U_{max}$;

\end{algorithmic}
\label{alg:response}
\end{algorithm}
}
 Algorithm \ref{alg:findk} uses algorithm \ref{alg:response} as a subroutine to find a good strategy $\vec{k}$ for the defender.
 
 \ignore{
\begin{algorithm}[t]
\caption{Find $\vec{k}^*$ and $P_{ADV}^*$}
\begin{algorithmic}[1]
\Require{$\mathcal{K}^0$, $\{(p_i,c_i)\}$}, $v$, $C_{max}$
\Ensure{$\vec{k}^*$}, $B^*$, $P_{ADV}^*$

\State $P_{max}^*\leftarrow 1$;
\State $B^*\leftarrow \infty$;
\ForAll{$\vec{k}^0\in \mathcal{K}^0$}
\State  $c\leftarrow C_{max}/\left( \sum_{i=1}^{\tau}\left(\sum_{pwd_j\in G_i} p_j\right) \cdot k_i^0\right) $;
\State $\vec{k}\leftarrow c\cdot \vec{k}^0$;
\State  $(B_{max},P_{max},U_{max})\leftarrow \mathsf{Algorithm 4}\left(\vec{k},\{p_i,c_i\},v\right)$;
\If{$P_{max}<P_{ADV}^*$}
\State $\vec{k}^*\leftarrow \vec{k}$;
\State $B^*\leftarrow B_{max}$;
\State $P_{ADV}^*\leftarrow P_{max}$;
\EndIf
\EndFor
\State \textbf{return} $\vec{k}^*$, $B^*$, $P_{ADV}^*$;
\end{algorithmic}
\label{alg:findk}
\end{algorithm}
}

\begin{theorem}
Given hash cost vector $\vec{k}$ to which the authentication server committed to in stage I,  Algorithm \ref{alg:response} finds the best response strategy of the adversary.
\end{theorem}
\begin{proof}
Recall that an adversary’s strategy consists of checking sequence $\pi$ and number of passwords to be checked $B$. We have already proved that our greedy algorithm yields an optimal checking sequence in Theorem 1. In addition, we have show that in an optimal strategy $B=\sum_{j=0}^i c_j$ for some $i\in[1,\cdots, n']$ in Corollary \ref{corollary}.  Algorithm \ref{alg:response} exhaustively searches all possible values of $B$ and returns the one resulting in the maximum utility for the adversary. Therefore, Algorithm \ref{alg:response} gives best response strategy of the adversary.
\end{proof}

We next present a heuristic algorithm to find the $\vec{k}^*$, and  $P_{ADV}^*$. Suppose we have a candidate set of hash cost parameters $K^0$, from which we can build a vector set $\mathcal{K}^0$ whose entry $\vec{k}^0$ consist of elements in $K^0$. After normalization we have $\vec{k}$. Then we use a brute force approach to find the optimal $\vec{k}$ which leads to the minimum adversary success rate. Details can be found in Algorithm \ref{alg:findk}.

}

\section{Empirical Analysis} \label{sec:empirical}

\revision{In this section, we design experiments to analyze the effectiveness of DAHash. At a high level we first fix (compressed) password distributions $D_{train}$ and $D_{eval}$ based on empirical password datasets and an implementation of \hard{}. Fixing the DAHash parameters $v$, $C_{max}$ and $k_{min}$ we use our algorithm $\mathsf{OptHashCostVec}(v,C_{max}, k_{min}, D_{train})$ to optimize the cost vector $\vec{k}^*$ and then we compute the attacker's optimal response $\mathsf{BestRes}(v,\vec{k}^*, D_{eval})$.  By setting $D_{train} = D_{eval}$ we can model the idealized scenario where the defender has perfect knowledge of the password distribution. Similarly, by setting $D_{train} \neq D_{eval}$ we can model the performance of DAHash when the defender optimizes $\vec{k}^*$ without perfect knowledge of the password distribution. In each experiment we fix $k_{min} = C_{max}/10$ and we plot the fraction of cracked passwords as the value to cost ratio $v/C_{max}$ varies. We compare DAHash with traditional password hashing fixing the hash cost to be $C_{max}$ for every password to ensure that the amortized server workload is equivalent. Before presenting our results we first describe how we define the password distributions $D_{train}$ and $D_{eval}$ and how we implement \hard{}. }

\subsection{The Password Distribution}
\revision{
One of the challenges in evaluating DAHash is that the exact distribution over user passwords is unkown. However, there are many empirical password datasets available due to password breaches. We describe two methods for deriving password distributions from password datasets. 

\subsubsection{Empirical Password Datasets} We consider nine empirical password datasets (along with their size $N$): Bfield ($0.54$ million), Brazzers ($0.93$ million), Clixsense ($2.2$ million), CSDN ($6.4$ million), LinkedIn ($174$ million), Neopets ($68.3$ million), RockYou ($32.6$ million), 000webhost ($153$ million) and Yahoo! ($69.3$ million). Plaintext passwords are available for all datasets except for the differentially private LinkedIn ~\cite{CS:HMBSD20} and Yahoo!~\cite{SP:Bonneau12,NDSS:BloDatBon16} frequency corpuses which intentionally omit passwords. With the exception of the Yahoo! frequency corpus all of the datasets are derived from password breaches. The differentially LinkedIn dataset is derived from cracked LinkedIn passwords \footnote{The LinkedIn password is derived from 174 million (out of 177.5 million) cracked password hashes which were cracked by KoreLogic~\cite{CS:HMBSD20}. Thus, the dataset omits $2\%$ of uncracked passwords. Another caveat is that the LinkedIn dataset only contains $164.6$ million unique e-mail addresses so there are some e-mail addresses with multiple associated password hashes.}. Formally, given $N$ user accounts $u_1,\ldots, u_N$ a dataset of passwords is a list $D = pw_{u_1},\ldots,pw_{u_N} \in \mathcal{P}$ of  passwords each user selected. We can view each of these passwords $pw_{u_i}$ as being sampled from some unkown distribution $D_{real}$. 

\vspace{-0.5cm}
\subsubsection{Empirical Distribution.}  Given a dataset of $N$ user passwords the corresponding password frequency list is simply a list of numbers $f_1 \geq f_2 \geq \ldots $ where $f_i$ is the number of users who selected the $i$th most popular password in the dataset --- note that $\sum_{i} f_i = N$. In the empirical password distribution we define the probability of the $i$th most likely password to be $\hat{p}_i= f_i/N$. In our experiments using the empirical password distribution we will set $D_{train}=D_{eval}$ i.e., we assume that the empirical password distribution is the real password distribution and that the defender knows this distribution. 

In our experiments we implement \hard{} by partitioning the password dataset $D_{train}$ into $\tau$ groups $G_1,\ldots, G_\tau$ using $\tau-1$ frequency thresholds $t_1 > \ldots > t_{\tau-1}$ i.e., $G_1 = \{i:f_i  \geq t_1\}$,  $G_{j} = \{i: t_{j-1} >  f_i \geq t_j \} $ for $1< j < \tau$ and $G_\tau = \{i: f_i < t_{\tau-1}\}$. Fixing a hash cost vector $\vec{k}=(k_1,\ldots, k_{\tau})$ we will assign passwords in group $G_j$ to have cost $k_j$ i.e., \hard{pw}$=k_j$ for $pw \in G_j$. We pick the thresholds to ensure that the probability mass $Pr[G_j] = \sum_{i \in G_j} f_i/N$ of each group is approximately balanced (without separating passwords in an equivalence set). While there are certainly other ways that \hard{} could be implemented (e.g., balancing number of passwords/equivalence sets in each group) we found that balancing the probability mass was most effective. }
 
\revision{ \noindent {\bf Good-Turing Frequency Estimation.} 
One disadvantage of using the empirical distribution is that it can often overestimate the success rate of an adversary. For example, let $\hat{\lambda}_B:= \sum_{i=1}^B \hat{p}_i$ and $N^{\prime} \leq N$ denote the number of distinct passwords in our dataset then we will {\em always } have $\hat{\lambda}_{N^{\prime}}:=\sum_{i\leq N^{\prime}} \hat{p}_i = 1$ which is inaccurate whenever $N \leq \left|\mathcal{P}\right|$. However, when $B \ll N$ we will have $\hat{\lambda}_B \approx \lambda_B$ i.e., the empirical distribution will closely match the real distribution. Thus, we will use the empirical distribution to evaluate the performance of DAHash when the value to cost ratio $v/C_{max}$ is smaller (e.g, $v/C_{max} \ll 10^8$) and we will highlight uncertain regions of the curve using Good-Turing frequency estimation. 

Let $N_f = \vert\{i : f_i=f \}\vert$ denote number of distinct passwords in our dataset that occur exactly $f$ times and let $B_f= \sum_{i > f} N_i$ denote the number of distinct passwords that occur more than $f$ times. Finally, let $E_f := |\lambda_{B_f} - \hat{\lambda}_{N_{B_f}}|$ denote the error of our estimate for $\lambda_{B_{ f}}$, the total probability of the top $B_{f}$ passwords in the real distribution. If our dataset consists of $N$ independent samples from an unknown distribution then Good-Turing frequency estimation tells us that the total probability mass of all passwords that appear exactly $f$ times is approximately $U_f:=(f+1)N_{f+1}/N$ e.g., the total probability mass of unseen passwords  is $U_0 = N_1/N$. This would imply that ${\lambda}_{B_f} \geq 1 - \sum_{j=0}^f U_j =  1-\sum_{j=0}^i \frac{(j+1)N_{j+1}}{N}$ and $E_f  \leq U_f$.

The following table plots our error upper bound $U_f$ for $0 \leq f \leq 10$ for 9 datasets. Fixing a target error threshold $\epsilon$ we define $f_{\epsilon} = \min\{i: U_i \leq \epsilon\}$ i.e., the minimum index such that the error is smaller than $\epsilon$. In our experiments we focus on error thresholds $\epsilon \in \{0.1, 0.01\}$. For example, for the Yahoo! (resp. Bfield) dataset we have $f_{0.1} = 1$ (resp. $j_{0.1}=2$) and $j_{0.01}=6$ (resp. $j_{0.01}=5$). As soon as we see passwords with frequency {\em at most} $j_{0.1}$ (resp. $j_{0.01}$) start to get cracked we highlight the points on our plots with a red (resp. yellow).  }
\vspace{-0.5cm}
\begin{table}[htb]
\caption{Error Upper Bounds: $U_i$ for Different Password Datasets}
\begin{center}
\begin{tabular}{cccccccccc}
\hline
       & Bfield & Brazzers & Clixsense & CSDN  & Linkedin & Neopets & Rockyou & 000webhost & Yahoo! \\ \hline
$U_0$  & \red{0.69}   & \red{0.531}    & \red{0.655}     & \red{0.557} & \red{0.123}    & \red{0.315}   & \red{0.365}   & \red{0.59}       & \red{0.425} \\ \hline
$U_1$  & \red{0.101}  & \red{0.126}    & \red{0.095}     & \red{0.092} &\red{0.321}    & \red{0.093}   & \red{0.081}   & \red{0.124}      & \red{0.065} \\ \hline
$U_2$  & \red{0.036}  & \red{0.054}    &\yellow{0.038}     & \yellow{0.034} & \red{0.043}    & \yellow{0.051}   & \yellow{0.036}   & \red{0.055}      & \yellow{0.031} \\ \hline
$U_3$  & \yellow{0.02}   & \yellow{0.03}     & \yellow{0.023}     & \yellow{0.018} & \yellow{0.055}    &\yellow{0.034}   & \yellow{0.022}   & \yellow{0.034}      & \yellow{0.021} \\ \hline
$U_4$  & \yellow{0.014}  & \yellow{0.02}     & \yellow{0.016}     & \yellow{0.012} & \yellow{0.018}    & \yellow{0.025}   & \yellow{0.017}   & \yellow{0.022}      & \yellow{0.015} \\ \hline
$U_5$  & \yellow{0.01}   & \yellow{0.014}    & \yellow{0.011}     & \yellow{0.008} & \yellow{0.021}    & \yellow{0.02}    & \yellow{0.013}   & \yellow{0.016}      & \yellow{0.012} \\ \hline
$U_6$  & 0.008  & \yellow{0.011}    &\yellow{0.009}     & 0.006 & \yellow{0.011}    & \yellow{0.016}   & \yellow{0.011}   & \yellow{0.012}      & \yellow{0.01}  \\ \hline
$U_7$  & 0.007  & \yellow{0.01}     & 0.007     & 0.005 & \yellow{0.011}    & \yellow{0.013}   & \yellow{0.01}    & \yellow{0.009}      & 0.009 \\ \hline
$U_8$  & 0.006  & 0.008    & 0.006     & 0.004 & \yellow{0.008}    & \yellow{0.011}   & 0.009   & 0.008      & 0.008 \\ \hline
$U_9$  & 0.005  & 0.007    & 0.005     & 0.004 & 0.007    & \yellow{0.01}    & 0.008   & 0.006      & 0.007 \\ \hline
$U_{10}$ & 0.004  & 0.007    & 0.004     & 0.003 & 0.006    & 0.009   & 0.007   & 0.005      & 0.006 \\ \hline
\end{tabular}
\end{center}
\label{default}
\end{table}
\vspace{-1cm}
\revision{ 
\subsubsection{Monte Carlo Distribution} As we observed previously the empirical password distribution can be highly inaccurate when $v/C_{max}$ is large. Thus, we use a different approach to evaluate the performance of DAHash when $v/C_{max}$ is large. In particular, we subsample passwords, obtain gussing numbers for each of these passwords and fit our distribution to the corresponding guessing curve. We follow the following procedure to derive a distribution: (1) subsample $s$ passwords $D_s$ from dataset $D$ with replacement; (2) for each subsampled passwords $pw \in D_s$ we use the Password Guessing Service~\cite{USENIX:USBCCKKMMS15} to obtain a guessing number $\guessing$ which uses Monte Carlo methods~\cite{CCS:DelFil15} to estimate how many guesses an attacker would need to crack $pw$ \footnote{The Password Guessing Service~\cite{USENIX:USBCCKKMMS15} gives multiple different guessing numbers for each password based on different sophisticated cracking models e.g., Markov, PCFG, Neural Networks. We follow the suggestion of the authors ~\cite{USENIX:USBCCKKMMS15} and use the minimum guessing number (over all autmated approached) as our final estimate.}. (3) For each $i \leq 199$ we fix guessing thresholds $t_0 <  t_1 < \ldots < t_{199}$ with $t_0:=0$, $t_1:=15$, $t_i - t_{i-1} = 1.15^{i+25}$, and $t_{199} = \max_{pw \in D_s} \{\guessing\}$. (4) For each $i \leq 199$ we compute $g_i$, the number of samples $pw \in D_s$ with $\guessing \in[t_{i-1},t_i)$. (5) We output a compressed distribution with $200$ equivalences sets using histogram density i.e., the $i$th equivalence set contains $t_{i}-t_{i-1}$ passwords each with probability $\frac{g_i}{s \times (t_i-t_{i-1})}$. 

In our experiments we repeat this process twice with $s=12,500$ subsamples to obtain two password distributions $D_{train}$ and $D_{eval}$. One advantage of this approach is that it allows us to evaluate the performance of DAHash against a state of the art password cracker when the ratio $v/C_{max}$ is large. The disadvantage is that the distributions $D_{train}$ and $D_{eval}$ we extract are based on {\em current} state of the art password cracking models. It is possible that we optimized our DAHash parameters with respect to the wrong distribution if an attacker develops an improved password cracking model in the future.
}

\revision{
{\noindent \bf Implementing \hard{} for Monte Carlo Distributions.} For Monte Carlo distribution \hard{pw} depends on the guessing number $\guessing$. In particular, we fix thresholds points $x_{1} > \ldots >  x_{\tau-1}$ and (implicitly) partition passwords into $\tau$ groups $G_1,\ldots, G_t$ using these thresholds i.e., $G_i = \{ pw~:~ x_{i-1}  \geq \guessing > x_{i}\}$. Thus, \hard{pw} would compute $\guessing$ and assign hash cost $k_i$ if $pw \in G_i$. As before the thresholds $x_1,\ldots, x_{\tau-1}$ are selected to (approximately) balance the probability mass in each group. }

\tikzDefaults
\begin{figure*}[ht]\centering
\subfloat[Bfield]{
\begin{tikzpicture}[scale=0.45]
\begin{semilogxaxis}[ymin=0]

\addplot+[stack plots=y] file {./empirical/benchmark/bfield.dat};

\addplot file {./empirical/tau3/bfield.dat};
\addplot file {./empirical/tau5/bfield.dat};
\addplot+[stack plots=y, stack dir=minus] file {./empirical/tau3/bfield.dat};
\addplot[stack plots=y, stack dir=plus, forget plot] file {./empirical/tau3/bfield.dat};
\addplot+[stack plots=y, stack dir=minus] file {./empirical/tau5/bfield.dat};

\path[name path=B] (axis cs:(9 * 1e7,0) -- (axis cs:(9 * 1e7,1);
\path[name path=A] (axis cs:(300000,0) -- (axis cs:(300000,1);
\tikzfillbetween[of=A and B, on layer=main]{red, opacity=0.2};

\path[name path=A1] (axis cs:(80000,0) -- (axis cs:(80000,1);
\tikzfillbetween[of=A1 and B, on layer=main]{yellow, opacity=0.2};

\node at (axis cs:5*1e7,0.5) {\rotatebox{90}{\tiny uncertain region}};
\end{semilogxaxis}
\end{tikzpicture}
\label{fig:bfield}
}
\hfill
\subfloat[Brazzers]{
\begin{tikzpicture}[scale=0.45]
\begin{semilogxaxis}[ymin=0]
\addplot+[stack plots=y] file {./empirical/benchmark/brazzers.dat};
\addplot file {./empirical/tau3/brazzers.dat};
\addplot file {./empirical/tau5/brazzers.dat};
\addplot+[stack plots=y, stack dir=minus] file {./empirical/tau3/brazzers.dat};
\addplot[stack plots=y, stack dir=plus, forget plot] file {./empirical/tau3/brazzers.dat};
\addplot+[stack plots=y, stack dir=minus] file {./empirical/tau5/brazzers.dat};

\path[name path=B] (axis cs:(9 * 1e7,0) -- (axis cs:(9 * 1e7,1);
\path[name path=A1] (axis cs:(3 *1e5,0) -- (axis cs:(3 * 1e5,1);
\tikzfillbetween[of=A1 and B, on layer=main]{red, opacity=0.2};

\path[name path=A] (axis cs:(2 *1e5,0) -- (axis cs:(2 * 1e5,1);
\tikzfillbetween[of=A and B, on layer=main]{yellow, opacity=0.2};

\node at (axis cs:5*1e7,0.5) {\rotatebox{90}{\tiny uncertain region}};

\end{semilogxaxis}
\end{tikzpicture}
\label{fig:brazzers}
}
\hfill
\subfloat[Clixsense]{
\begin{tikzpicture}[scale=0.45]
\begin{semilogxaxis}[ymin=0]
\addplot+[stack plots=y] file {./empirical/benchmark/clixsense.dat};
\addplot file {./empirical/tau3/clixsense.dat};
\addplot file {./empirical/tau5/clixsense.dat};
\addplot+[stack plots=y, stack dir=minus] file {./empirical/tau3/clixsense.dat};
\addplot[stack plots=y, stack dir=plus, forget plot] file {./empirical/tau3/clixsense.dat};
\addplot+[stack plots=y, stack dir=minus] file {./empirical/tau5/clixsense.dat};

\path[name path=A] (axis cs:(4 * 1e5,0) -- (axis cs:(4 * 1e5,1);
\path[name path=B] (axis cs:(9 * 1e7,0) -- (axis cs:(9 * 1e7,1);
\tikzfillbetween[of=A and B, on layer=main]{yellow, opacity=0.2};

\path[name path=A1] (axis cs:(8 * 1e5,0) -- (axis cs:(8 * 1e5,1);
\tikzfillbetween[of=A1 and B, on layer=main]{red, opacity=0.2};
\node at (axis cs:5*1e7,0.5) {\rotatebox{90}{\tiny uncertain region}};

\end{semilogxaxis}
\end{tikzpicture}
\label{fig:clixsense}
}

\subfloat[CSDN]{
\begin{tikzpicture}[scale=0.45]
\begin{semilogxaxis}[ymin=0]
\addplot+[stack plots=y] file {./empirical/benchmark/csdn.dat};
\addplot file {./empirical/tau3/csdn.dat};
\addplot file {./empirical/tau5/csdn.dat};
\addplot+[stack plots=y, stack dir=minus] file {./empirical/tau3/csdn.dat};
\addplot[stack plots=y, stack dir=plus, forget plot] file {./empirical/tau3/csdn.dat};
\addplot+[stack plots=y, stack dir=minus] file {./empirical/tau5/csdn.dat};

\path[name path=A] (axis cs:(2* 1e6,0) -- (axis cs:(2* 1e6,1);
\path[name path=B] (axis cs:(9 * 1e7,0) -- (axis cs:(9 * 1e7,1);
\tikzfillbetween[of=A and B, on layer=main]{red, opacity=0.2};

\path[name path=A1] (axis cs:(1* 1e6,0) -- (axis cs:(1* 1e6,1);
\tikzfillbetween[of=A1 and B, on layer=main]{yellow, opacity=0.2};
\node at (axis cs:5*1e7,0.5) {\rotatebox{90}{\tiny uncertain region}};

\end{semilogxaxis}
\end{tikzpicture}
\label{fig:csdn}
}
\hfill
\subfloat[Linkedin]{
\begin{tikzpicture}[scale=0.45]
\begin{semilogxaxis}[ymin=0]
\addplot+[stack plots=y] file {./empirical/benchmark/linkedin.dat};
\addplot file {./empirical/tau3/linkedin.dat};
\addplot file {./empirical/tau5/linkedin.dat};
\addplot+[stack plots=y, stack dir=minus] file {./empirical/tau3/linkedin.dat};
\addplot[stack plots=y, stack dir=plus, forget plot] file {./empirical/tau3/linkedin.dat};
\addplot+[stack plots=y, stack dir=minus] file {./empirical/tau5/linkedin.dat};

\path[name path=A] (axis cs:(3* 1e7,0) -- (axis cs:(3* 1e7,1);
\path[name path=B] (axis cs:(9 * 1e7,0) -- (axis cs:(9 * 1e7,1);
\tikzfillbetween[of=A and B, on layer=main]{red, opacity=0.2};

\path[name path=A1] (axis cs:(2* 1e7,0) -- (axis cs:(2* 1e7,1);
\tikzfillbetween[of=A1 and B, on layer=main]{yellow, opacity=0.2};

\node at (axis cs:5*1e7,0.5) {\rotatebox{90}{\tiny uncertain region}};

\end{semilogxaxis}
\end{tikzpicture}
\label{fig:linkedin}
}
\hfill
\subfloat[Neopets]{
\begin{tikzpicture}[scale=0.45]
\begin{semilogxaxis}[ymin=0]
\addplot+[stack plots=y] file {./empirical/benchmark/neopets.dat};
\addplot file {./empirical/tau3/neopets.dat};
\addplot file {./empirical/tau5/neopets.dat};
\addplot+[stack plots=y, stack dir=minus] file {./empirical/tau3/neopets.dat};
\addplot[stack plots=y, stack dir=plus, forget plot] file {./empirical/tau3/neopets.dat};
\addplot+[stack plots=y, stack dir=minus] file {./empirical/tau5/neopets.dat};

\path[name path=A] (axis cs:(2* 1e7,0) -- (axis cs:(2* 1e7,1);
\path[name path=B] (axis cs:(9 * 1e7,0) -- (axis cs:(9 * 1e7,1);
\tikzfillbetween[of=A and B, on layer=main]{red, opacity=0.2};

\path[name path=A1] (axis cs:(5* 1e6,0) -- (axis cs:(5* 1e6,1);
\tikzfillbetween[of=A1 and B, on layer=main]{yellow, opacity=0.2};

\node at (axis cs:5*1e7,0.5) {\rotatebox{90}{\tiny uncertain region}};

\end{semilogxaxis}
\end{tikzpicture}
\label{fig:neopets}
}

\subfloat[Rockyou]{
\begin{tikzpicture}[scale=0.45]
\begin{semilogxaxis}[ymin=0]
\addplot+[stack plots=y] file {./empirical/benchmark/rockyou.dat};
\addplot file {./empirical/tau3/rockyou.dat};
\addplot file {./empirical/tau5/rockyou.dat};
\addplot+[stack plots=y, stack dir=minus] file {./empirical/tau3/rockyou.dat};
\addplot[stack plots=y, stack dir=plus, forget plot] file {./empirical/tau3/rockyou.dat};
\addplot+[stack plots=y, stack dir=minus] file {./empirical/tau5/rockyou.dat};

\path[name path=A] (axis cs:(7 * 1e6,0) -- (axis cs:(7 * 1e6,1);
\path[name path=B] (axis cs:(9 * 1e7,0) -- (axis cs:(9 * 1e7,1);
\tikzfillbetween[of=A and B, on layer=main]{red, opacity=0.2};

\path[name path=A1] (axis cs:(3 * 1e6,0) -- (axis cs:(3 * 1e6,1);
\tikzfillbetween[of=A1 and B, on layer=main]{yellow, opacity=0.2};

\node at (axis cs:5*1e7,0.5) {\rotatebox{90}{\tiny uncertain region}};

\end{semilogxaxis}
\end{tikzpicture}
\label{fig:rockyou}
}
\hfill
\subfloat[000webhost]{
\begin{tikzpicture}[scale=0.45]
\begin{semilogxaxis}[ymin=0]
\addplot+[stack plots=y] file {./empirical/benchmark/webhost.dat};
\addplot file {./empirical/tau3/webhost.dat};
\addplot file {./empirical/tau5/webhost.dat};
\addplot+[stack plots=y, stack dir=minus] file {./empirical/tau3/webhost.dat};
\addplot[stack plots=y, stack dir=plus, forget plot] file {./empirical/tau3/webhost.dat};
\addplot+[stack plots=y, stack dir=minus] file {./empirical/tau5/webhost.dat};

\path[name path=A] (axis cs: 5* 1e6,0) -- (axis cs: 5* 1e6,1);
\path[name path=B] (axis cs: 9 * 1e7,0) -- (axis cs:9 * 1e7,1);
\tikzfillbetween[of=A and B, on layer=main]{red, opacity=0.2};

\path[name path=A1] (axis cs: 2* 1e6,0) -- (axis cs: 2* 1e6,1);
\tikzfillbetween[of=A1 and B, on layer=main]{yellow, opacity=0.2};

\node at (axis cs:5*1e7,0.5) {\rotatebox{90}{\tiny uncertain region}};

\end{semilogxaxis}
\end{tikzpicture}
\label{fig:webhost}
}
\hfill
\subfloat[Yahoo]{
\begin{tikzpicture}[scale=0.45]
\begin{semilogxaxis}[ymin=0]
\addplot+[stack plots=y] file {./empirical/benchmark/yahoo.dat};
\addplot file {./empirical/tau3/yahoo.dat};
\addplot file {./empirical/tau5/yahoo.dat};
\addplot+[stack plots=y, stack dir=minus] file {./empirical/tau3/yahoo.dat};
\addplot[stack plots=y, stack dir=plus, forget plot] file {./empirical/tau3/yahoo.dat};
\addplot+[stack plots=y, stack dir=minus] file {./empirical/tau5/yahoo.dat};

\path[name path=A] (axis cs:(2* 1e7,0) -- (axis cs:(2* 1e7,1);
\path[name path=B] (axis cs:(9 * 1e7,0) -- (axis cs:(9 * 1e7,1);
\tikzfillbetween[of=A and B, on layer=main]{red, opacity=0.2};

\path[name path=A1] (axis cs:(7* 1e6,0) -- (axis cs:(7* 1e6,1);
\tikzfillbetween[of=A1 and B, on layer=main]{yellow, opacity=0.2};

\node at (axis cs:5*1e7,0.5) {\rotatebox{90}{\tiny uncertain region}};

\end{semilogxaxis}
\end{tikzpicture}
\label{fig:Yahoo}
}
\caption{Adversary Success Rate vs $v/C_{max}$ for Empirical Distributions} {\par \small the red (resp. yellow) shaded areas denote unconfident regions where the the empirical distribution might diverges from the real distribution $U_i \geq 0.1$ (resp. $U_i \geq 0.01$).} 
\label{fig:empirical}
\vspace{-0.4cm}
\end{figure*}
\tikzDefaults
\begin{figure*}[ht]\centering
\subfloat[Bfield]
{
\begin{tikzpicture}[scale = 0.44]
\begin{semilogxaxis}[xmax= 1e12, ymin=0]

\addplot+[stack plots=y] file {./monte/benchmark/bfield.dat};
\addplot file {./monte/tau3/bfield.dat};
\addplot file {./monte/tau5/bfield.dat};
\addplot+[stack plots=y, stack dir=minus] file {./monte/tau3/bfield.dat};
\addplot[stack plots=y, stack dir=plus, forget plot] file {./monte/tau3/bfield.dat};
\addplot+[stack plots=y, stack dir=minus] file {./monte/tau5/bfield.dat};
\end{semilogxaxis}
\end{tikzpicture}
\label{fig:bfield}
}
\hfill
\subfloat[Brazzers]{
\begin{tikzpicture}[scale = 0.44]
\begin{semilogxaxis}[xmax= 1e12, ymin=0]
\addplot+[stack plots=y] file {./monte/benchmark/brazzers.dat};

\addplot file {./monte/tau3/brazzers.dat};
\addplot file {./monte/tau5/brazzers.dat};
\addplot+[stack plots=y, stack dir=minus] file {./monte/tau3/brazzers.dat};
\addplot[stack plots=y, stack dir=plus, forget plot] file {./monte/tau3/brazzers.dat};
\addplot+[stack plots=y, stack dir=minus] file {./monte/tau5/brazzers.dat};
\end{semilogxaxis}
\end{tikzpicture}
\label{fig:brazzers}
}
\hfill
\subfloat[Clixsense]{
\begin{tikzpicture}[scale = 0.44]
\begin{semilogxaxis}[ xmax= 1e12, ymin=0]
\addplot+[stack plots=y] file {./monte/benchmark/clixsense.dat};
\addplot file {./monte/tau3/clixsense.dat};
\addplot file {./monte/tau5/clixsense.dat};
\addplot+[stack plots=y, stack dir=minus] file {./monte/tau3/clixsense.dat};
\addplot[stack plots=y, stack dir=plus, forget plot] file {./monte/tau3/clixsense.dat};
\addplot+[stack plots=y, stack dir=minus] file {./monte/tau5/clixsense.dat};
\end{semilogxaxis}
\end{tikzpicture}
\label{fig:brazzers}
}

\subfloat[CSDN]{
\begin{tikzpicture}[scale = 0.44]
\begin{semilogxaxis}[ xmax= 1e12, ymin=0]
\addplot+[stack plots=y] file {./monte/benchmark/csdn.dat};

\addplot file {./monte/tau3/csdn.dat};
\addplot file {./monte/tau5/csdn.dat};
\addplot+[stack plots=y, stack dir=minus] file {./monte/tau3/csdn.dat};
\addplot[stack plots=y, stack dir=plus, forget plot] file {./monte/tau3/csdn.dat};
\addplot+[stack plots=y, stack dir=minus] file {./monte/tau5/csdn.dat};
\end{semilogxaxis}
\end{tikzpicture}
\label{fig:csdn}
}
\hfill
\subfloat[Neopets]{
\begin{tikzpicture}[scale = 0.44]
\begin{semilogxaxis}[ xmax= 1e12, ymin=0]
\addplot+[stack plots=y] file {./monte/benchmark/neopets.dat};

\addplot file {./monte/tau3/neopets.dat};
\addplot file {./monte/tau5/neopets.dat};
\addplot+[stack plots=y, stack dir=minus] file {./monte/tau3/neopets.dat};
\addplot[stack plots=y, stack dir=plus, forget plot] file {./monte/tau3/neopets.dat};
\addplot+[stack plots=y, stack dir=minus] file {./monte/tau5/neopets.dat};
\end{semilogxaxis}
\end{tikzpicture}
\label{fig:neopets}
}
\hfill
\subfloat[000webhost]{
\begin{tikzpicture}[scale = 0.44]
\begin{semilogxaxis}[ xmax= 1e12, ymin=0]
\addplot+[stack plots=y] file {./monte/benchmark/000webhost.dat};

\addplot file {./monte/tau3/000webhost.dat};
\addplot file {./monte/tau5/000webhost.dat};
\addplot+[stack plots=y, stack dir=minus] file {./monte/tau3/000webhost.dat};
\addplot[stack plots=y, stack dir=plus, forget plot] file {./monte/tau3/000webhost.dat};
\addplot+[stack plots=y, stack dir=minus] file {./monte/tau5/000webhost.dat};
\end{semilogxaxis}
\end{tikzpicture}
\label{fig:webhost}
}
\caption{Adversary Success Rate vs $v/C_{max}$ for Monte Carlo Distributions} 
\label{fig:monte}
\vspace{-0.5cm}
\end{figure*}
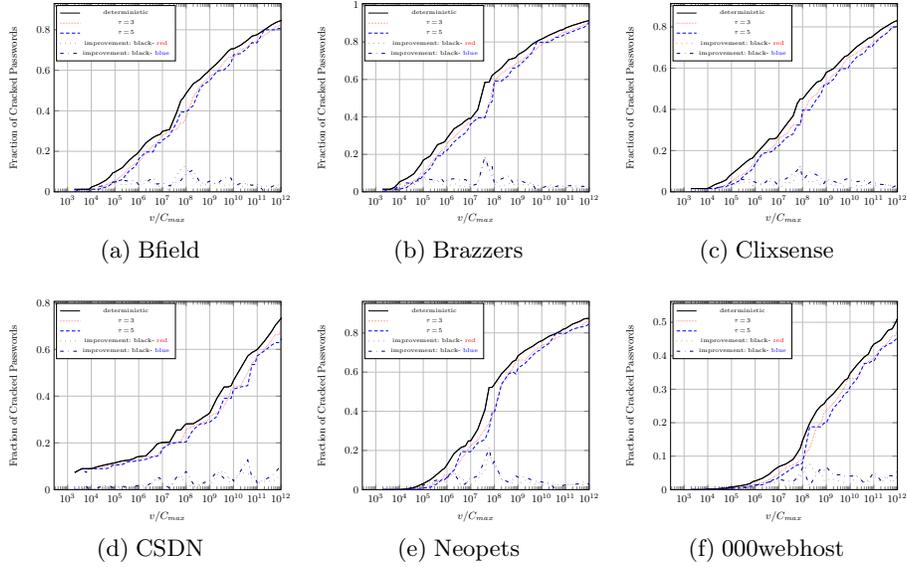
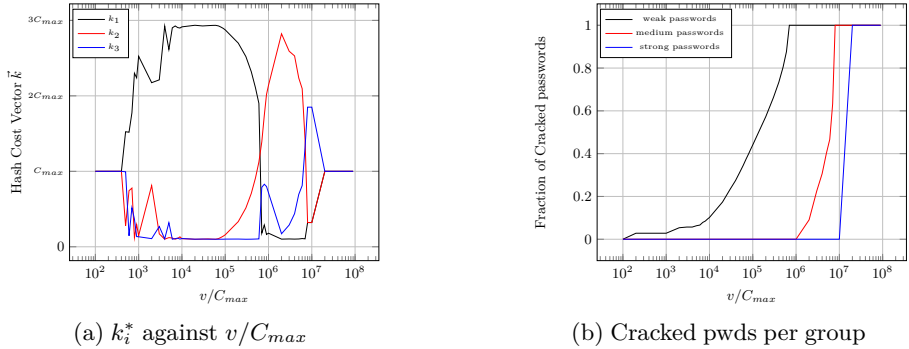
\begin{figure}[t]\centering
%
\subfloat[$k_i^*$ against $v/C_{max}$]{
\begin{tikzpicture}[scale=0.6]
\begin{semilogxaxis}[
title style={align=center},
xlabel={$v/C_{max}$},
ylabel={Hash Cost Vector $\vec{k}$},
ytick={0, 1,  2, 3},
yticklabels={0,\tiny$C_{max}$, \tiny$2C_{max}$, \tiny$3C_{max}$},
grid=major,
cycle list = {{black},  {red},  {blue}, {brown, mark=none}, {cyan, mark=none},{yellow, mark=none}, {purple, mark=none}, {red,mark=none}},
legend style = {font=\tiny, at={(.01,.98)}, anchor=north west},
legend entries = {$k_1$, $k_2$, $k_3$}
]
\addplot table[x = v, y = c1] {./sr/costs/rockyou.dat};
\addplot table[x = v, y = c2] {./sr/costs/rockyou.dat};
\addplot table[x = v, y = c3] {./sr/costs/rockyou.dat};
\end{semilogxaxis}
\end{tikzpicture}
\label{fig:r1}
}
\hfill
\subfloat[Cracked pwds per group]{
\begin{tikzpicture}[scale=0.6]
\begin{semilogxaxis}[
title style={align=center},
xlabel={$v/C_{max}$},
ylabel={Fraction of Cracked passwords},
grid=major,
cycle list = {{black},  {red},  {blue}, {brown, mark=none}, {cyan, mark=none},{yellow, mark=none}, {purple, mark=none}, {red,mark=none}},
legend style = {font=\tiny, at={(.01,.98)}, anchor=north west},
legend entries = {weak passwords, medium passwords, strong passwords}
]
\addplot table[x = v, y = c2] {./sr/pergroup/rockyoupergroup.dat};
\addplot table[x = v, y = c3] {./sr/pergroup/rockyoupergroup.dat};
\addplot table[x = v, y = c4] {./sr/pergroup/rockyoupergroup.dat};
\end{semilogxaxis}
\end{tikzpicture}
\label{fig:r2}
}
\caption{Hash Costs and Cracked Fraction per Group for RockYou (Empirical Distribution)} 
\vspace{-0.5cm}
\label{fig:rockyou}
\end{figure}

\subsection{Experiment Results}
\revision{Figure \ref{fig:empirical} evalutes the performance of DAHash on the empirical distributions empirical datasets. To generate each point on the plot we first fix $v/C_{max} \in \{ i \times 10^{2+j}: 1 \leq i \leq 9, 0 \leq j \leq 5\}$, use $\mathsf{OptHashCostVec}()$ to tune our DAHash parameters $\vec{k}^*$ and then compute the corresponding success rate for the attacker. The experiment is repeated for the empirical distributions derived from our $9$ different datasets. }
 In each experiment we group password equivalence sets into $\tau$ groups ($\tau \in \{1,3,5\}$) $G_1,\ldots,G_\tau$ of (approximately) equal probability mass. In addition, we set $k_{min} = 0.1 C_{max}$ and iteration of BITEOPT to be 10000. \revision{The yellow (resp. red) regions correspond to unconfident zones where we expect that the our results for empirical distribution might differ from reality by $1\%$ (resp. $10\%$). }

\revision{ Figure \ref{fig:monte} evaluates the performance of DAHash for for Monte Carlo distributions we extract using the Password Guessing Service. For each dataset we extract two distributions $D_{train}$ and $D_{eval}$. For each $v/C_{max} \in \{ j \times 10^i : ~ 3 \leq i \leq 11, j \in \{2,4,6,8\}\}$ we obtain the corresponding optimal hash cost $\vec{k}^*$ using $\mathsf{OptHashCostVec}()$ with the distribution $D_{train}$ as input. Then we compute success rate of attacker on $D_{eval}$ with the same cost vector $\vec{k}^*$. We repeated this for 6 plaintext datasets: Bfield, Brazzers, Clixsense, CSDN, Neopets and 000webhost for which we obtained guessing numbers from the Password Guessing Service.}

\revision{Figure \ref{fig:empirical} and  Figures \ref{fig:monte} plot $P_{ADV}$ vs $v/C_{max}$ for each different dataset under empirical distribution and Monte Carlo distribution. Each sub-figure contains three separate lines corresponding to $\tau\in \{1,3,5\}$ respectively}. We first remark that $\tau=1$ corresponds to the status quo when all passwords are assigned the same cost parameter i.e., $\mathsf{getHardness}(pw_u) = C_{max}$. When $\tau=3$ we can interpret our mechanism as classifying all passwords into three groups (e.g., weak, medium and strong) based on their strength.  The fine grained case $\tau=5$ has more strength levels into which passwords can be placed. 

\revision{{\bf \noindent DAHash Advantage:} For empirical distributions the improvement peaks in the uncertain region of the plot. Ignoring the uncertain region the improvement is still as large as 15\%. For Monte Carlo distributions we find a 20\% improvement e.g., $20\%$ of user passwords could be saved with the DAHash mechanism.}

\revision{Figure \ref{fig:r1} explores how the hash cost vector $\vec{k}$ is allocated between weak/medium/strong passwords as $v/C_{max}$ varies (using the RockYou empirical distribution with $\tau=3$). Similarly, Figure \ref{fig:r2} plots the fraction of weak/medium/strong passwords being cracked as adversary value increases. We discuss these each of these figures in more detail below.}

\vspace{-0.3cm}
\subsubsection{How Many Groups ($\tau$)?}
We explore the impact of $\tau$ on the percentage of passwords that a rational adversary will crack. Since the untargeted adversary attacks all user accounts in the very same way, the percentage of passwords the adversary will crack is the probability that the adversary succeeds in cracking a random user’s account, namely, $P_{ADV}^*$. Intuitively, a partition resulting in more groups can grant a better protection for passwords, since by doing so the authentication server can deal  with passwords with more precision and can better tune the fitness of protection level to password strength. We observe in Figure \ref{fig:empirical} and  Figures \ref{fig:monte} for most of time the success rate reduction when $\tau = 5$ is larger compared to $\tau = 3$. However, the marginal benefit plummets, changing $\tau$ from 3 to 5 does not bring much performance improvement. A positive interpretation of this observation is that we can glean most of the benefits of our  differentiated hash cost mechanism without making the $\mathsf{getHardness}()$ procedure too complicated e.g., we only need to partition passwords into three groups weak, medium and strong. 

Our hashing mechanism does not overprotect passwords that are too weak to withstand offline attack when adversary value is sufficiently high, nor passwords that are strong enough so that a rational offline attacker loses interest in cracking. The effort previously  spent in protecting passwords that are too weak/strong  can be reallocated into protecting ``savable'' passwords at some $v/C_{max}$. Thus, our DAHash algorithm beats traditional hashing algorithm without increasing the server's expected workload i.e., the cost parameters $\vec{k}$ are tuned such that expected workload is always $C_{max}$ whether $\tau=1$ (no differentiated costs), $\tau=3$ (differentiated costs) or $\tau=5$ (finer grained differentiated costs). We find that the defender can reduce the percentage of cracked passwords $P_{ADV}^*$ without increasing the workload $C_{max}$. 


\vspace{-0.3cm}
\subsubsection{Understanding the Optimal Allocation $\vec{k}^*$}
We next discuss how our mechanism re-allocates the cost parameters across $\tau=3$ different groups as $v/C_{max}$ increases --- see Figures \ref{fig:r1}. At the very beginning $v/C_{max}$ is small enough that a rational password gives up without cracking any password even if the authentication server assigns equal hash costs to different groups of password, e.g., $k_1=k_2=k_3=C_{max}$. 

As the adversary value increases the Algorithm $\mathsf{OptHashCostVec}()$ starts to reallocate $\vec{k}$ so that most of the authentication server's effort is used to protect the weakest passwords in group $G_1$ while minimal key-stretching effort is used to protect the stronger passwords in groups $G_2$ and $G_3$
 In particular, we have $k_1 \approx 3 C_{max}$ for much of the interval $v/C_{max} \in [4*10^3, 10^5]$ while $k_2,k_3$ are pretty small in this interval e.g., $k_2, k_3 \approx 0.1 \times C_{max}$.  However, as the ratio $v/C_{max}$ continues to increase from $10^6$ to $10^7$ Algorithm $\mathsf{OptHashCostVec}()$  once again begins to reallocate $\vec{k}$ to place most of the weight on $k_2$ as it is now necessary to protect passwords in group $G_2$. Over the same interval the value of $k_1$ decreases sharply as it is no longer possible to protect all of the weakest passwords group $G_1$. 

As $v/C_{max}$ continues to increase Algorithm $\mathsf{OptHashCostVec}()$  once again reallocates $\vec{k}$ to place most of the weight on $k_3$ as it is now necessary to protect the strongest passwords in group $G_3$ (and no longer possible to protect all of the medium strength passwords in group $G_2$). Finally, $v/C_{max}$ gets  too large it is no longer possible to protect passwords in any group so Algorithm $\mathsf{OptHashCostVec}()$ reverse back to equal hash costs , i.e., $k_1=k_2=k_3=C_{max}$. 

\ignore{
Notice that for Yahoo dataset (Figure \ref{fig:3a}) the zenith of $k_1$ is above $3C_{max}$. We remark that this is not a violation of server cost constraint $\sum_{i=1}^3\left(\sum_{pw_j\in G_i} \Pr[pw_j]\right) \cdot k_i\leq C_{max}$, since the password probability mass in $G_1$ is strictly less than $1/3$ (even though our group partition approach is trying to bring it to 1/3 as closely as possible). Similarly, the zenith of $k_3$ is below $3C_{max}$ that is because  passwords in $G_3$ are quantitatively dominant in dataset.}

Figures \ref{fig:r1} and \ref{fig:r2} tell a complementary story. Weak passwords are cracked first as $v/C_{max}$ increases, then follows the passwords with medium strength and the strong passwords stand until $v/C_{max}$ finally becomes sufficiently high. For example, in Figure \ref{fig:r2} we see that initially the mechanism is able to protect all passwords, weak, medium and strong. However, as $v/C_{max}$ increases from $10^5$ to $10^6$ it is no longer possible to protect the weakest passwords in group $G_1$. Up until $v/C_{max}=10^6$ the mechanism is able to protect all medium strength passwords in group $G_2$, but as the $v/C_{max}$ crosses the $10^7$ threshold it is not feasible to protect passwords in group $G_2$. The strongest passwords in group $G_3$ are completely projected until $v/C_{max}$ reaches  $2\times 10^7$ at which point it is no longer possible to protect any passwords because the adversary value is too high. 

 Viewing together with Figure \ref{fig:r1},  we observe that it is only when weak passwords are about to be cracked completely (when $v/C_{max}$ is around $7 \times 10^5$) that the authentication server begin to shift effort to protect medium passwords. The shift of protection effort continues as the adversary value increases until medium strength passwords are about to be massively cracked.  The same observation applies to medium passwords and strong password. 
While we used the plots from the RockYou dataset for discussion,  the same trends  also hold for other datasets (concrete thresholds may differ).

{\bf Robustness} 
We remark that in Figure \ref{fig:empirical} and Figure \ref{fig:monte}  the actual hash cost vector $\vec{k}$ we chose is not highly sensitive to small changes of the adversary value $v$ (only in semilog x axis fluctuation of $\vec{k}$ became obvious). Therefore, DAHash may still be useful even when it is not possible to obtain a precise estimate of $v$ \revision{or when the attacker's value $v$ varies slightly over time.}
\ignore{
  \subsubsection{Imperfect Knowledge} In real world settings the defender would not have perfect knowledge of the password distribution when partitioning passwords into $\tau$ groups $G_i$. Thus, to evaluate the performance of DAHash under more realistic settings we trained used a count min sketch data-structure to partition passwords into groups. The count min sketch is trained on the empirical password dataset e.g., RockYou or LinkedIn and provides a noisy estimate of the number of users who selected each password.  See the appendix for additional details about how we used the count min sketch data structure.}
  
{\bf Incentive Compatibility} One potential concern in assigning different hash cost parameters to different passwords is that we might inadvertently provide incentive for a user to select weaker passwords. In particular, the user might prefer a weaker password $pw_i$ to $pw_j$ ($\Pr[pw_i] > \Pr[pw_j]$) if s/he believes that the attacker will guess $pw_j$ before $pw_i$ e.g., the hash cost parameter $k(pw_j)$ is so small that makes $r_j > r_i$. We could directly encode incentive compatibility into our constraints for the feasible range of defender strategies $\mathcal{F}_{C_{max}}$  i.e., we could explicitly add a constraints that $r_j \leq r_i$ whenever $\Pr[pw_i] \leq \Pr[pw_j]$. However, Figures \ref{fig:r2} suggest that this is not necessary. Observe that the attacker does not crack any medium/high strength passwords until {\em all} weak passwords have been cracked. Similarly, the attacker does not crack any high strength passwords until {\em all} medium strength passwords have been cracked.

\section{Conclusions}

We introduce the notion of DAHash.  In our mechanism the cost parameter assigned to distinct passwords may not be the same. This allows the defender to focus key-stretching effort primarily on passwords where the effort will influence the decisions of a rational attacker who will quit attacking as soon as expected costs exceed expected rewards. We present  Stackelberg game model to capture the essentials of the interaction between the legitimate authentication server (leader) and an untargeted offline attacker (follower). In the game the defender (leader) commits to the hash cost parameters $\vec{k}$ for different passwords and the attacker responds in a utility optimizing manner. We presented a highly efficient algorithm to provably compute the attacker's best response given a  password distribution. Using this algorithm as a subroutine we use an evolutionary algorithm to find a good strategy $\vec{k}$ for the defender. Finally, we analyzed the performance of our differentiated cost password hashing algorithm using empirical password datasets . Our experiments indicate that DAHash can dramatically reduce the fraction of passwords that would be cracked in an untargeted offline attack in comparison with the traditional approach e.g., by up to  \revision{$15\%$ under empirical distributions and $20\%$ under Monte Carlo distributions}. This gain comes without increasing the expected workload of the authentication server. Our mechanism is fully compatible with modern memory hard password hashing algorithms such as SCRYPT~\cite{Per09}, Argon2id~\cite{Argon2} and DRSample~\cite{CCS:AlwBloHar17}.


\section*{Acknowledgment}
The work was supported by the National Science Foundation under grants CNS \#1704587, CNS \#1755708 and CNS \#1931443. The authors wish to thank Matteo Dell`Amico (shepherd) and other anonymous reviewers for constructive feedback which helped improve the paper.
\bibliographystyle{splncs04}
\bibliography{cryptobib/abbrev2,cryptobib/crypto,bounded-parallel-mhf,jit,extra}
\appendix
\ignore{
\section{Frequently Asked Questions}
\subsection{Frequently Asked Questions}
{\color{blue}
\subsubsection{Are we making weak passwords weaker?}

We understand the concern that our mechanism might make weak passwords weaker by not using a uniform hash cost for all passwords thus discriminating certain group of passwords. However, we are not making weak passwords weaker in general, we do so only when password value is sufficiently high that no matter how much effort we put in protecting weak passwords, they would be cracked anyway. At this point, we can save some effort to protect  those ``savable’’ passwords.
\subsubsection{Would misclassification be a loophole that facilities the adversary to crack passwords?}

Suppose a password guess $pwd$ is weak, but mislabeled as strong, and the at this point ($v/C_{max}$) hash cost for weak password $k_1$ is greater that hash cost for strong password $k_3$. Then $\Pr[pwd]/k_3 > \Pr[pwd]/k_1$, it means the position of $pwd$ in adversary’s checking sequence is before the position when no misclassification was involved. Then users who picked $pwd$ would be more vulnerable with respect to no misclassification. Meanwhile, notice $pwd$ going to the ``front end’’ of a checking queue implies some password(s) ($pwd^{\prime},\cdots,$) are pushed to ``’back end’’ of the checking queue, users who have chosen those passwords become more resistant to the attack. Since the adversary has no way of recognizing uses who picked $pwd$, he cannot obtain a better utility from aiming for a specific group of users. More importantly (if we care the total users as a whole), the overall percentage of cracked password is reduced even if there might be some misclassification. Since misclassification happens randomly and $v/C_{max}$ depends on different dataset there is no discrimination against a certain group of user.  Even so, some readers might still consider it to be some what ethically problematic. To avoid the case as much as possible, we recommend to use count-min or count-median sketch to estimate frequency and to further classify passwords.

\subsubsection{Would our mechanism be vulnerable to side channel attack and timing attack?}

We remark that hash function computation time is not the only parameter that the server can tune to impose differentiated hash costs. Given memory hard functions were used, other parameters directly affecting memory size independent of computation time, such as degree of parallelism $p$ as in Argon2 and difficulty parameter $d$ (the depth of a stack of subprocedures) in SCRYPT, can be increased in exchange for a short authentication time. By tuning these parameters and fixing authentication time we can ensure that 1) the threat of side channel attack and timing attack are mitigated; 2) login time for a legitimate user is acceptable.}
}

\section{Algorithms}\label{app:algorithm}
\begin{algorithm}[h]
\begin{algorithmic}[1]
\Require{$u$, $pw_u$, $L$}
\State $s_u \overset{\$}{\leftarrow} \{0,1\}^L$;
\State $k\leftarrow \mathsf{GetHardness}(pw_u)$;
\State $h\leftarrow H(pw_u,s_u;~k)$;
\State $\mathsf{StoreRecord}$ $(u,s_u,h)$
\end{algorithmic}
\caption{Account creation}  \label{alg:createaccount} 
\end{algorithm}

\begin{algorithm}[h]
\begin{algorithmic}[1]
\Require{$u$, $pw_u'$}
\State $(u,s_u,h) \leftarrow \mathsf{FindRecord}(u)$;
\State $k' \leftarrow \mathsf{GetHardness}(pw_u')$;
\State $h'\leftarrow H(pw_u,s_u;~k')$;
\State \textbf{Return} $h == h'$
\end{algorithmic}
\caption{Password authentication} \label{alg:authenticate}
\end{algorithm}
\vspace{-0.3cm}
\ignore{
\begin{algorithm}[h]
\begin{algorithmic}[1]
\State \textbf{Preprocessing:}
\State Partition all passwords into $\tau$ groups $G_i,i\in\{1,\cdots,\tau\}$;
\State Associate $k_i$ with $G_i$;
\Require{$pw_u$}
\Ensure{$k(pw_i)$}
\For{$i=1$ to $\tau$}
\If{$pw_u\in G_i$}
\State $k(pw_u)=k_i$;  
\EndIf
\EndFor
\State \textbf{return} $k(pw_i)$;
\end{algorithmic}
 \caption{$\mathsf{GetHardness}(pw_u)$} \label{alg:getHardness}
 \label{alg:hardness}
\end{algorithm}
}
\vspace{-0.3cm}
\begin{algorithm}[h]
\caption{The adversary’s best response $\mathsf{BestRes}(v, \vec{k}, D), $}
\begin{algorithmic}[1]
\Require{$\vec{k}$, $v$, $D$}
\Ensure{$(\pi^*, B^*)$}
\State sort $\{\frac{p_i}{k_i}\}$ and reindex such that  $\frac{p_1}{k_1}\geq\cdots\geq\frac{p_{n’}}{k_{n’}}$ to get $\pi^*$;
\State $B^* = \arg \max U_{ADV}\left(v,\vec{k}, (\pi^*,B)\right)$
\State \textbf{return} $(\pi^*,B^*)$;
\end{algorithmic}
\label{alg:response}
\end{algorithm}

\section{Missing Proofs}\label{app:proof}

\subsection*{Proof of Theorem\ref{thm:noinversions}}
\revision{
\begin{remindertheorem}{Theorem \ref{thm:noinversions}}
\thmnoinversions
\end{remindertheorem}}
\begin{proofof}{Theorem\ref{thm:noinversions}}
\revision{Fixing $B,v,\vec{k}$ we let $\pi$ be the optimal ordering of passwords. If there are multiple optimal orderings we take the ordering $\pi$ with the fewest number of inversions. Recall that an inversion is a pair $b< a$ such that $r_{\pi(a)}>r_{\pi(b)}$ i.e., $pw_{\pi(b)}$ is scheduled to be checked before $pw_{\pi(a)}$ but password $pw_{\pi(a)}$ has a higher ``bang-for-buck'' ratio. We say that we have a consecutive inversion if $a=b+1$. Suppose for contradiction that $\pi$ has an inversion  }
\begin{itemize}
\item \revision{If $\pi$ has an inversion then $\pi$ also has a consecutive inversion. Let $(a,b)$ be the closest inversion i.e., minimizing $|a-b|$. The claim is that $(a,b)$ is a consecutive inversion. If not there is some $c$ such that $b < c < a$. Now either $r_{\pi(c)} < r_{\pi(a)}$ (in which case the pair $(c,a)$ form a closer inversion) or $r_{\pi(c)} \geq r_{\pi(a)} > r_{\pi(b)}$ (in which case the pair $(b,c)$ forms a closer inversion). In either case we contradict our assumption.  } 

\item \revision{Let $b$, $b+1$ be a consecutive inversion.} We now define $\pi'$ to be the same ordering as $\pi$ except that the order of \revision{$b$ and $b+1$} is flipped \revision{i.e., $\pi'(b) = \pi(b+1)$ and $\pi'(b+1)=\pi(b)$ so that we now check password $pw_{\pi(b+1)}$ before password $pw_{\pi(b)}$}. Note that $\pi'$ has one fewer inversion than $\pi$. 

\item We will prove that $$U_{ADV}\left(v,\vec{k},(\pi',B)\right) \geq U_{ADV}\left(v,\vec{k},(\pi,B)\right)$$ contradicting the choice of $\pi$ as the optimal ordering with the fewest number of inversions. By definition \eqref{eq:utility} we have 
\begin{equation*}\small
\begin{aligned}
&U_{ADV}\left(v,\vec{k},(\pi,B)\right) =v\cdot \lambda(\pi,B)-\sum^ B_{i=1} k(pw_{\pi(i)})\cdot \left(1-\lambda(\pi,i-1)\right),
\end{aligned}
\end{equation*}
and 
\begin{equation*}\small
\begin{aligned}
&U_{ADV}\left(v,\vec{k},(\pi',B)\right) =v\cdot \lambda(\pi’,B)-\sum^ B_{i=1} k(pw_{\pi’(i)})\cdot \left(1-\lambda(\pi’,i-1)\right).
\end{aligned}
\end{equation*}
Note that $\pi$ and $\pi’$ only differ at \revision{guesses $b$ and $b+1$ and coincide at the rest of passwords.} Thus, we have  
$\lambda(\pi,i)=\lambda(\pi',i)$ when $0\leq i\leq b-1$ or when $i \geq b+1$. 
\revision{For convenience, set} $\lambda=\lambda(\pi,b-1)$. 

\revision{Assuming that $b+1 \leq B$ and taking difference of above two equations, }
\begin{equation}\label{eq:diff}
\begin{aligned}
&U_{ADV}\left(v,\vec{k},(\pi,B)\right) -U_{ADV}\left(v,\vec{k},(\pi',B)\right)\\
&=k(pw_{\pi(b)})\lambda+k(pw_{\pi(b+1)})(\lambda+p_{\pi(b)})\\
&-k(pw_{\pi(b+1)})\lambda+k(pw_{\pi(b)})(\lambda+p_{\pi(b+1)})\\
&=p_{\pi(b)}\cdot k(pw_{\pi(b+1)})- p_{\pi(b+1)}\cdot k(pw_{\pi(b)}) \leq 0.
\end{aligned}
\end{equation}

\revision{The last inequality holds since $0> (r_{\pi(b)} - r_{\pi(b+1)}) = \frac{p_{\pi(b)}}{k(pw_{\pi(b)})}  - \frac{p_{\pi(b+1)}}{k(pw_{\pi(b+1)})}$ (we multiply by both sides of the inequality by $ \left(k(pw_{\pi(b+1)}) k(pw_{\pi(b)})\right)$ to obtain the result).  From equation \eqref{eq:diff} we see that the new swapped strategy $\pi’$ has a utility at least as large as $\pi$. Contradiction! }

\revision{ If $b> B$ then swapping has no impact on utility as neither password $pw_{\pi(b)}$ or $pw_{\pi(b+1)}$ will be checked. 

Finally if $B=b$ then checking last password in $\pi$ provides non-negative utility, i.e.,
\begin{equation}
v\cdot p_{\pi(B)} - k(pw_{\pi(B)})(1-\lambda(\pi,B-1)) \geq 0,
\end{equation}
whereas continue to  check $pw(B+1)$ after executing strategy $(\pi,B)$ would reduce utility, i.e.,
\begin{equation}
v\cdot p_{\pi(B+1)} - k(pw_{\pi(B+1)})(1-\lambda(\pi,B)) < 0.
\end{equation}
From the above two equations, we have
\begin{equation}
r_{\pi(B)}=\frac{p_{\pi(B)}}{k(pw_{\pi(B)})} \geq \frac{1-\lambda(\pi,B-1)}{v} > \frac{1-\lambda(\pi,B)}{v} >  \frac{p_{\pi(B+1)}}{k(pw_{\pi(B)+1})}  = r_{\pi(B+1)}.
\end{equation}
Again, we have contradiction. Therefore, an optimal checking sequence does not contain inversions.
}
\end{itemize} 
\end{proofof}

\subsection*{Proof of Theorem \ref{corollary}}
\revision{
\begin{remindertheorem}{Theorem \ref{corollary}}
\maincorollary
\end{remindertheorem}
}

\begin{proofof}{Theorem \ref{corollary}}
The proof of Theorem \ref{corollary} follows from the following lemma which states that whenever $pwd_i$ and $pwd_j$ are in the same equivalence set the optimal attack strategy will either check both of these passwords or neither.
\begin{lemma}
\thmcompact
\end{lemma}

\begin{proof}
Suppose for contradiction that the optimal strategy checks  $pwd_i$ but not $pwd_j$. Then WLOG we can assume that $\mathsf{Inv}_{\pi^*}(i)= B^*$ is the last password to be checked and that $\mathsf{Inv}_{\pi^*}(j) = B^*+1$ is the next password to be checked (otherwise, we can swap $pwd_j$ with the password in the equivalence set that will be checked next). Since $pw_i$ and $pwd_j$ are in the same equivalence set, we have $\Pr[pw_i]=\Pr[pw_j]$ and $k(pw_i)=k(pw_j)$. The marginal utility of checking $pwd_i$ is
$$\Delta_i=v\Pr[pw_i]-k(pw_i)(1-\lambda(\pi^*,B^*)).$$
Because checking $pwd_i$ is part of the optimal strategy, it must be the case $\Delta_i\geq 0$. Otherwise, we would immediately derive a contradiction since the strategy $(\pi^*,B^*-1)$ would have greater utility than $(\pi^*,B^*)$. Now the marginal utility $\Delta_j = U_{ADV}\left(v,\vec{k},(\pi^*,B^*+1)\right) -U_{ADV}\left(v,\vec{k},(\pi^*,B^*)\right)$  of checking $pw_j$ as well is
$$\Delta_j=v\Pr[pw_j]-k(pw_j)(1-\lambda(\pi,B^*)-\Pr[pw_j])>\Delta_i \geq 0 \ . $$
Since $\Delta_j>0$ we have $U_{ADV}\left(v,\vec{k},(\pi^*,B^*+1)\right) > U_{ADV}\left(v,\vec{k},(\pi^*,B^*)\right) $ contradicting the optimality of $(\pi^*,B^*)$. \hfill $\square$
\end{proof}

From Theorem \ref{thm:noinversions} it follows that we will check the equivalence sets in the order of bang-for-buck ratios. Thus, $B^*$ must lie in the set $\{0,|es_1|,|es_1|+|es_2|,\ldots, \sum_{i=1}^{n^{\prime}} |es_i|\}$. 
\end{proofof}

\section{FAQ}

\subsection*{Could this mechanism harm user's who pick weak passwords?} \revision{
We understand the concern that our mechanism might provide less protection for weak passwords since we using a uniform hash cost for all passwords. If our estimation of the value $v$ of a cracked password is way too high then it is indeed possible that the DAHash parameters would be misconfigured in a way that harms users with weak passwords. However, even in this case we ensure that every password recieves a minimum level of acceptable protection by setting a minimum hash cost parameter $k_{min}$ for any password. We note that if our estimation of $v$ is accurate and it is feasible to deter an attacker from cracking weaker passwords then DAHash will actually tend to provide stronger protection for these passwords. On the other hand if the password is sufficiently weak that we cannot deter an attacker then these weak passwords will always be cracked no matter what actions we take. Thus, DAHash will reallocate effort to focus on protecting stronger passwords. }

\ignore{
\subsection*{What about side-channel attacks?}
\revision{
The concern is that an eavesdropping attacker who observe the time it takes for a user to login successfully might be able to draw inferences about the strength of the user's password. This is a valid concern if the hash cost parameter, which is tied to password strength, is correlated with the authentication time. However, modern memory hard password hash functions have two relevant cost parameters: running time and memory. Space-time cost is measured as the product of these two parameters.  Thus, space-time costs can often be adjusted without altering the running time e.g., by increasing memory usage. Another way to mitigate the risk of side-channel leakage would be to fix a delay (e.g., $1$ second) which would be the same for everyone e.g., even if the hash of some user passwords can be computed much faster.  }
}



\end{document}